\documentclass[conference]{IEEEtran}

\usepackage{cite}

\usepackage{graphicx}
\usepackage{amsmath}
\usepackage[ruled,vlined,linesnumbered]{algorithm2e}

\usepackage{array}

\ifCLASSOPTIONcompsoc
 \usepackage[caption=false,font=normalsize,labelfont=sf,textfont=sf]{subfig}
\else
 \usepackage[caption=false,font=footnotesize]{subfig}
\fi

\usepackage{stfloats}
\usepackage{xurl}

\usepackage{amssymb}
\usepackage{amsthm}
\usepackage{subcaption}
\usepackage{xcolor}
\usepackage{hyperref}
\usepackage{xspace}
\usepackage{multirow}
\usepackage{multicol}
\usepackage{booktabs}
\usepackage[table]{xcolor}
\usepackage{colortbl}
\usepackage{diagbox}
\usepackage{enumitem}
\usepackage{makecell}
\usepackage{array} 
\usepackage{amssymb}   
\usepackage{pifont}    
\newcommand{\cmark}{\ding{51}}%
\newcommand{\xmark}{\ding{55}}%
\renewcommand{\sectionautorefname}{Section}

\def\C{\textit{Chameleon}\xspace}
\def\eg{\textit{e.g.}\xspace}

\newtheorem{theorem}{Theorem}
\newtheorem{lemma}{Lemma}

\begin{document}
%
\title{Chameleon: Robust Defense Against Tor Website Fingerprinting via Many-to-Many Traffic Morphing}





\author{
\IEEEauthorblockN{
Yuwen Cui\IEEEauthorrefmark{1},
Kai Wei\IEEEauthorrefmark{1},
Kehan Shen\IEEEauthorrefmark{1},
Ning Wang\IEEEauthorrefmark{1},
Zhuo Lu\IEEEauthorrefmark{2},
Yao Liu\IEEEauthorrefmark{1},
Guangjing Wang\IEEEauthorrefmark{1}
}

\IEEEauthorblockA{
\IEEEauthorrefmark{1}
Bellini College of Artificial Intelligence, Cybersecurity and Computing, University of South Florida
}

\IEEEauthorblockA{
\IEEEauthorrefmark{2}
Department of Electrical and Computer Engineering, University of South Florida
}
}
	

%


\IEEEoverridecommandlockouts
\makeatletter\def\@IEEEpubidpullup{6.5\baselineskip}\makeatother

\maketitle

\begin{abstract}
Website fingerprinting (WF) attacks can infer users' browsing activities from encrypted Tor traffic by exploiting side-channel features. Although many WF defenses have been proposed, we find that most existing defenses create learnable web trace mapping features. We further show that robustness against adversarial training does not necessarily imply robustness against defense-aware autoencoder (DAAE)-based attacks.

To address these limitations, we present \C, a robust WF defense based on many-to-many randomized traffic morphing. \C selects morphing candidates with high intra-class diversity and low inter-class disparity. \C randomly maps each webpage trace to multiple candidates, and allows different webpages to share morphing targets, thereby increasing adversarial uncertainty. For practical Tor deployment, \C introduces a radix-trie-based synchronization mechanism that enables pluggable transport (PT) endpoints to identify consistent morphing traces using packet-direction prefixes, together with trace mutation and normalized prefix matching to reduce overhead. We evaluate \C against six state-of-the-art defenses and five WF attacks on three public datasets in closed- and open-world settings. Compared with Adaptive Tamaraw, \C reduces adversarial-training-based attack accuracy by up to 36.74\% while reducing bandwidth and time overhead by 34.12\% and 60.38\%, respectively. Under DAAE-based RF attacks on GTT23, \C limits attack performance to 35.19\% F1-score while Adaptive Tamaraw only limits it to 88.22\% F1-score. In the real-world PT bridge evaluation, \C substantially reduces the effectiveness of strong WF attacks while incurring only 16.25\% time overhead.

\end{abstract}


%
\IEEEpeerreviewmaketitle

\section{Introduction}
\label{sec:intro}

Encrypted networks such as Tor~\cite{goldschlag1999onion} remain vulnerable to website fingerprinting (WF) attacks~\cite{shen2023subverting, bahramali2023realistic, sirinam2018deep, bhat2018var}. 
For example, observable side-channel information in Tor network traffic, such as traffic timing~\cite{wang2016realistically, dyer2012peek, wang2014effective, kwon2015circuit}, traffic bursts~\cite{zhao2024towards, bahramali2023realistic}, and traffic direction~\cite{rahman2019tik, bhat2018var}, can still reveal web browsing activities.

Many defense mechanisms~\cite{cai2014systematic, juarez2016toward, wang2017walkie} have been proposed to defend against WF attacks. For example, adaptive padding-based obfuscation techniques~\cite{panchenko2011website, juarez2016toward, gong2020zero} insert dummy packets to obfuscate features, such as packet timing and traffic bursts. 
Adversarial perturbation techniques~\cite{li2022minipatch, xie2025gapdis, qiao2024trace, wang2026cease} generate adversarial noise to misguide deep learning-based WF attacks.
Regularization-based methods~\cite{wang2014effective, holland2020regulator, wang2017walkie} transform traffic traces into regularized sequences with similar patterns, reducing the distinguishability among different website traces.

\begin{figure}[t]
  \centering
  \includegraphics[scale=0.32]{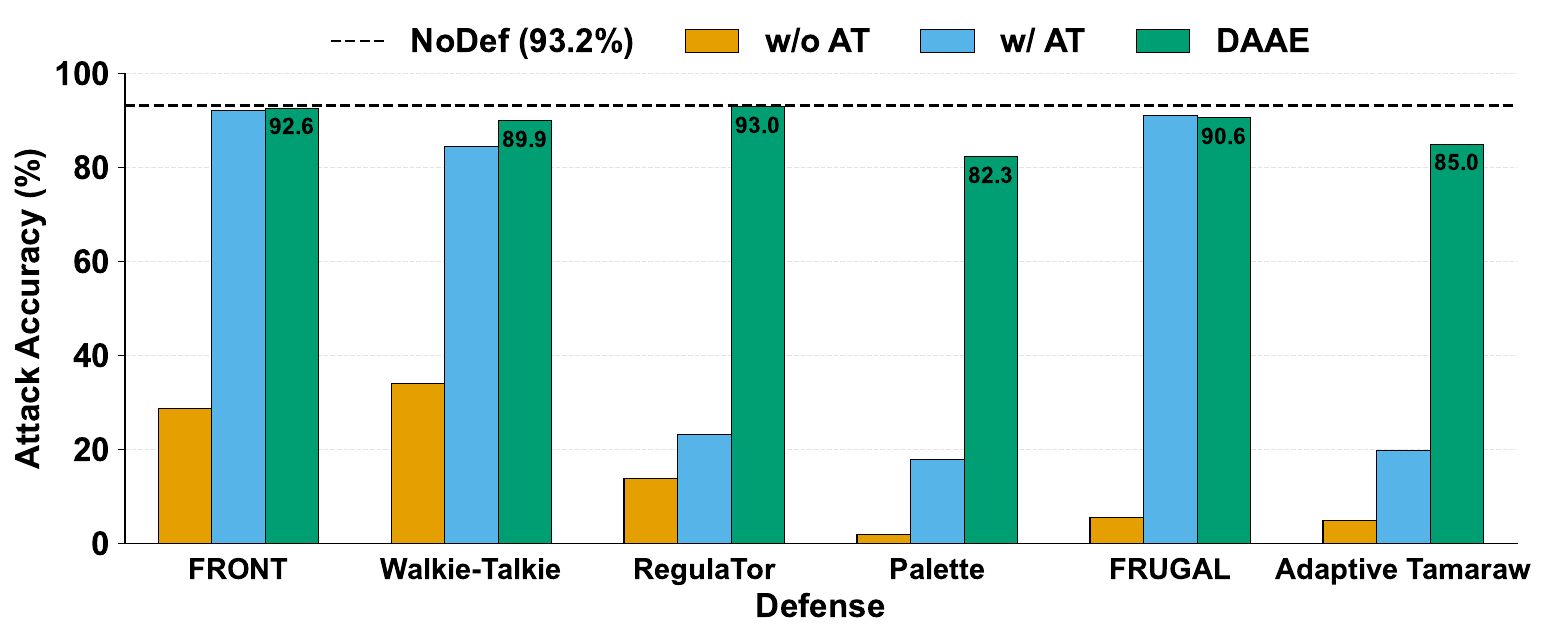}
  \caption{An exemplar of WF defense evaluation on the GTT23~\cite{jansen2026measurement} dataset under adversarial training (AT)-assisted and DAAE-assisted deep fingerprinting (DF)~\cite{sirinam2018deep} attack. The legend includes non-defended traffic (NoDef),
  traffic defended with different defenses against the DF attack without adversarial training (w/o AT), against the DF attack with adversarial training (w/ AT), and against the DAAE-assisted DF (DAAE).}
  \label{fig:introcom}
  \vspace{-17pt}
\end{figure}

However, we discover that most existing defenses create stable, learnable defense-induced mappings, which can be utilized by \textit{defense-aware} attacks.
Specifically, we assume a strong WF adversary who is aware that a defense is deployed on Tor but does not know the specific defense mechanism being used.
The adversary can train a Defense-Aware Autoencoder (DAAE) to learn defense-induced perturbation patterns using paired original Tor website traces and defense-perturbed traces.
The specific design of DAAE is in \appendixautorefname~\ref{subsec:daae}.
We find that DAAE can easily recover raw traffic features from previously unseen defended traces.
As shown in \figureautorefname~\ref{fig:introcom}, existing defenses reduce the website fingerprinting attack model DF~\cite{sirinam2018deep} accuracy from 93.2\% to 2\%-35\%.
However, the DAAE-based DF attack recovers attack performance against these defenses, achieving over 90\% accuracy against RegulaTor~\cite{holland2020regulator} and over 80\% against Palette~\cite{shen2024real} and Adaptive Tamaraw~\cite{khajavi2026lightening}.
Similarly, adversarial training mostly recovers the attack performance against several defenses, with accuracy exceeding 90\% for FRONT~\cite{gong2020zero} and FRUGAL~\cite{wang2026cease} and 80\% for Walkie-Talkie~\cite{wang2017walkie}.
Furthermore, we reveal that \textbf{defense robustness against adversarial training does not necessarily imply robustness to DAAE-based attacks}. For instance, RegulaTor, Palette, and Adaptive Tamaraw remain resilient to adversarial training-assisted attacks, but can be compromised by DAAE-based attacks.

In this work, we design \C to eliminate stable mappings through randomized many-to-many transformations for a robust WF defense. 
\C is based on traffic morphing, which transforms the traffic pattern of a given website to resemble that of another site~\cite{wang2014effective, al2019bimorphing}. 
Through a many-to-many morphing design, \C randomly maps a single website trace to multiple target morphing traces. Each target trace can correspond to multiple source traces.
The key innovation is a novel design to ensure consistency between the pluggable transport client and server (PTs)~\cite{torptspec2022, gong2023wfdefproxy} in real-world deployments of many-to-many random morphing.
\C theoretically increases the uncertainty faced by the adversary and thereby raises the difficulty of DAAE and adversarial training-based WF attacks. 
In addition, we design a radix trie–based structure and trace mutation algorithm that enables efficient morphing target selection, so as to reduce the overhead when dealing with unseen websites. Specifically, we address the following three challenges.

\textit{First, how to defend against defense-aware adversarial training-based and DAAE-based attacks?} To obfuscate defense-aware attacks, we design methods to collect traffic that exhibits high intra-class diversity and low inter-class disparity, which is used as traffic morphing candidates. During online deployment, each webpage trace is morphed to one of these candidates, misleading WF attack models. Specifically, we design a lightweight offline trace-selection algorithm that characterizes the geometric relationships among traffic traces and identifies representative traces with high intra-class diversity and low inter-class disparity. 
In addition, \C randomly maps each webpage trace to multiple candidate traces while allowing different webpages to share the same morphing targets.
This randomized mapping eliminates the deterministic traffic transformation exploited by DAAE, thereby substantially improving defense robustness.

\begin{figure}[t]
  \centering
  \includegraphics[scale=0.28]{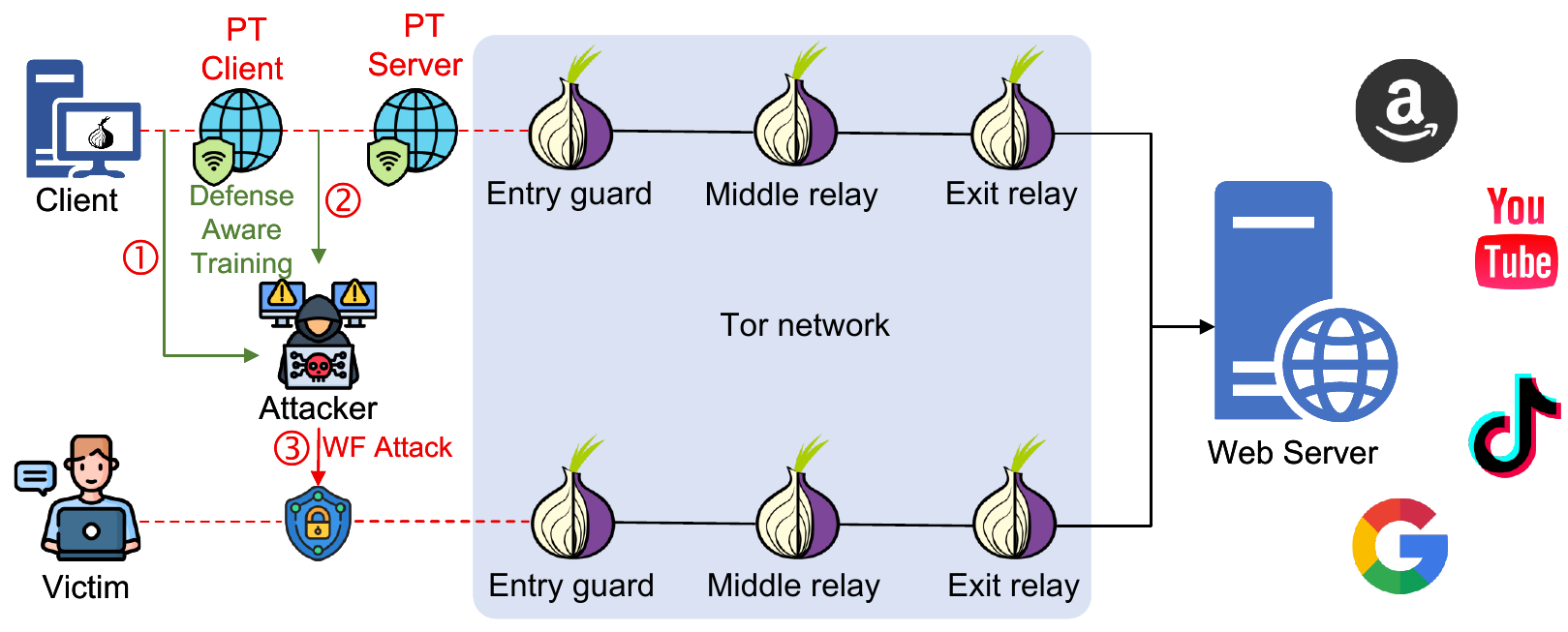}
  \caption{Overview of the Tor network and Threat Position.}
  \label{fig:torarch}
  \vspace{-17pt}
\end{figure}

\textit{Second, how can a many-to-many traffic morphing strategy be synchronized between the PT client and the PT server?} The defenses~\cite{gong2020zero, holland2020regulator, xie2025gapdis, shen2024real}, including \C, can be deployed through PTs, as shown in \figureautorefname~\ref{fig:torarch}. A PT client and PT server provide user-configurable traffic defenses without modifying the underlying Tor protocol~\cite{torptspec2022}. However, synchronizing web traffic between the PT client and bridge after many-to-many random morphing is challenging because the randomization can induce the endpoints to select inconsistent morphing traces. 

We observe that packet direction sequences are determined from the packet departure timestamps at the PT endpoint during real-time transmission rather than their delivery timestamps. This observation provides a consistent basis for synchronization between the PT client and PT server.
Therefore, we design a radix trie~\cite{krishnaraj2021efficient}-based structure to pre-match candidate morphing traces according to their packet-direction prefixes.
The PT endpoint that first determines the morphing trace can send a short sequence of consecutive packet directions from the selected trace as an implicit synchronization signal. Upon receiving this sequence, the peer PT endpoint performs a prefix search on its local radix trie and identifies the same morphing trace without requiring an explicit trace identifier. Once the trace is identified, both endpoints apply the same morphing pattern to subsequent traffic, thereby maintaining consistent real-time morphing across the PTs.

\textit{Third, how to reduce the extra time and bandwidth overhead incurred by the morphing process?} As the number of candidate traces grows, the morphing process introduces additional computation time and bandwidth overhead. 
The core question is how to achieve a trade-off between the overhead and the defense performance. We have two main designs. (i) We propose a trace mutation algorithm that minimizes additional padding when computing the distance between the real-time input packets and the selected target morphing trace. In this way, we can reduce the overhead introduced by the potential trace difference from unseen webpages. (ii) We design a row-wise normalization scheme for the candidate traces to enable efficient trace matching. The normalization removes scale and offset variations within each trace, ensuring that matching decisions are driven by intrinsic traffic patterns rather than absolute magnitudes. Then, we construct prefix-based feature vectors, where traces sharing common structural prefixes are organized hierarchically within the radix trie. This design facilitates $log$-time lookup during morphing target selection. 

We compare \C with six state-of-the-art (SOTA) WF defenses against five SOTA WF attack models on three widely used public datasets in the closed- and open-world scenarios. 
Compared with the recent Adaptive Tamaraw defense~\cite{khajavi2026lightening}, \C further reduces
adversarial training-based WF attack accuracy by up to 36.74\%, while also reducing bandwidth overhead by 34.12\% and time overhead by 60.38\% on the DF dataset.
In particular, under DAAE-based RF attacks~\cite{shen2023subverting}, \C limits attack performance on the GTT23 dataset to 40.68\% precision and a 35.19\% F1-score. In comparison, the same RF attack achieves 80.64\% precision and a 70.30\% F1-score under Palette defense, and 89.82\% precision and an 88.22\% F1-score under Adaptive Tamaraw defense strategy.

Overall, our main contributions are listed as follows:
\begin{itemize}
    \item We design a robust WF defense model that builds candidate website traces with high intra-class diversity and low inter-class disparity to remove conspicuous features in morphing.
    \item We introduce a radix trie structure to efficiently match a morphing trace from the website trace candidates and a trace mutation algorithm to efficiently reduce overhead.
    \item We evaluate SOTA WF defenses against adversarial training-based and DAAE-based attacks, revealing limitations in existing WF defense mechanisms. 
    \item We conduct a comprehensive evaluation and demonstrate that \C exhibits superior robustness to adversarial training-based WF attacks with modest overhead.
\end{itemize}

\section{Background and Motivation}
\label{sec:background}


\begin{figure}[t]
  \centering
  \includegraphics[scale=0.40]{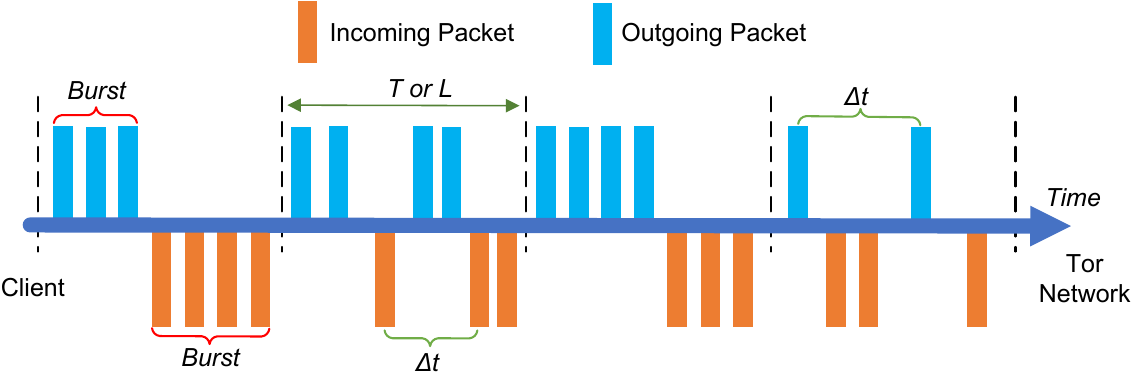}
  \caption{Tor traffic timing trace with direction, bursts, and ordering from client to Tor network.}
  \label{fig:timingtrace}
  \vspace{-10pt} 
\end{figure}

\subsection{Tor Traffic Features}
The main Tor traffic features, as shown in \figurename~\ref{fig:timingtrace}, include packet inter-arrival time, burst, packet ordering, direction, and so on. Specifically, traffic direction is encoded as +1 for outgoing packets from the client to the Tor network and -1 for incoming packets from the Tor network to the client~\cite{rahman2019tik, shi2025multiscale, zhao2024towards}. The traffic direction provides packet ordering information, which describes the sequence in which packets are transmitted and received during a browsing session. The packet ordering captures the precise arrangement of request-response interactions between a client and a server~\cite{wang2013improved}.
The traffic bursts are sequences of consecutive packets in the same direction, either incoming or outgoing. The burst patterns capture higher-level structural information about how websites load resources, such as videos, audio, and pictures, reflecting grouped transactions between browsers and servers. 

Packet inter-arrival time~\cite{wang2016realistically} measures the time interval $\Delta t$ between consecutive packets in the same direction. The $T$ and $L$, as shown in \figurename~\ref{fig:timingtrace}, denote the fixed time window and the traffic length for captured Tor flows, respectively. Generally, the Tor protocol splits traffic into cells with a fixed size of 512 bytes. The fixed time window $T$ or the traffic length $L$ helps standardize traffic traces into uniform representations. In this work, \C obfuscates trace-level features, including inter-packet timing and traffic burst structures, by morphing traces toward pre-selected candidate website traces. These features are specifically targeted, as they constitute the primary discriminative patterns exploited by SOTA WF attacks~\cite{bahramali2023realistic, shen2023subverting}.

\subsection{Tor Pluggable Transport Bridges}
Pluggable Transport (PT) bridges act as programmable proxies between the Tor client and entry guard.
In a PT deployment, the Tor client forwards traffic to a locally running PT client, which transforms the traffic before transmitting it to a corresponding PT server associated with an entry guard. The PT server reverses the transformation and forwards the resulting traffic to the Tor network. In this way, PT helps establish an end-to-end connection while hiding or modifying traffic characteristics observable on the client-bridge link.

The PT architecture provides an extensible interface through which traffic-obfuscation mechanisms can be deployed~\cite{torptspec2022}. A defense can be implemented within the PT client and server components and applied to traffic without requiring prior knowledge of the destination website or changes to the underlying Tor network. In particular, a client-side defense can delay packets and inject dummy packets before traffic leaves the PT client/server, while the corresponding bridge-side component coordinates any state required to reconstruct the original Tor communication. In summary, PTs offer a deployable and modular interface for evaluating WF defenses under realistic network conditions while preserving compatibility with the existing Tor infrastructure.

\subsection{Related Defense Work Comparison}
\label{subsec:motivation}

\begin{table}
\centering
\scriptsize
\setlength{\tabcolsep}{1pt}
\renewcommand{\arraystretch}{1.2}
\caption{A comparison of the WF defenses.}
\label{tab:relatedwork}
\begin{tabular}{c|c|ccccc}
\toprule
\textbf{Categories} & \textbf{Defenses} & \textbf{\makecell{Live \\ Traffic}} & \textbf{\makecell{AT \\ Resistance}} & \textbf{\makecell{DAAE \\ Resistance}}  & \textbf{\makecell{Supports \\ PT}}  \\

\midrule

\multirow{5}{*}{\textbf{Obfuscation}} & WTF-PAD~\cite{juarez2016toward} & \cmark & \xmark & -- & \cmark \\
& FRONT~\cite{gong2020zero} & \cmark & \xmark & -- & \cmark \\
& ALERT~\cite{qiao2024trace} & \xmark & \xmark & -- & \xmark \\
& GAPDiS~\cite{xie2025gapdis} & \cmark & \xmark & -- & \cmark \\
& FRUGAL~\cite{wang2026cease} & \cmark & \xmark & -- & \xmark \\

\rowcolor{gray!15}
 & CS-BuFLO~\cite{cai2014cs} & \xmark & \xmark & -- & \xmark \\
 \rowcolor{gray!15}
& Super-sequence~\cite{wang2014effective} & \xmark & \xmark & -- & \xmark \\
\rowcolor{gray!15}
& Palette~\cite{shen2024real} & \cmark & \cmark & \xmark & \cmark \\
\rowcolor{gray!15}
 & Adaptive Tamaraw~\cite{khajavi2026lightening} & $\circ$ & \cmark & \xmark & \xmark \\
\rowcolor{gray!15}
\multirow{-5}{*}{\textbf{Regularization}} & \textbf{\C(Ours)} & \textbf{\cmark}  & \textbf{\cmark} & \textbf{\cmark} & \cmark \\
\bottomrule

\end{tabular}
\begin{minipage}{0.98\linewidth}

\rule{0pt}{2pt}
\setlength{\topskip}{5pt}
\setlength{\leftskip}{4pt}

$\circ$: Insufficient evidence from the paper and provided artifact; --: If the defense does not resist AT, the DAAE-based attack is not performed; Live Traffic and Supports PT are determined from the original papers and publicly available artifacts; AT and DAAE attack resistance are determined from our evaluations unless otherwise noted.
\end{minipage}
\vspace{-17pt}
\end{table}

We compare and summarize major defense methods in \tableautorefname~\ref{tab:relatedwork}. ``Live Traffic" denotes applicability to real-time traffic. ``AdvTrain Resistance" denotes adversarial training (AT)-based attack accuracy below 50\%. ``DAAE Resistance" denotes DAAE-based attack accuracy below 50\%. ``Supports PT" denotes that a PT-based pluggable defense implementation is provided. In particular, defenses that support live traffic adaptation, such as FRONT, GAPDiS, and FRUGAL, remain vulnerable to AT-based attacks. In contrast, defenses with stronger robustness, such as Palette, introduce additional constraints on traffic shaping and deployment. More importantly, existing defenses provide limited protection against DAAE-based attacks, which can explicitly learn the characteristics introduced by a deployed defense. A practical WF defense needs to jointly support live traffic, robustness against defense-aware attacks, and be deployable through a PT interface.

\section{Preliminary}
\label{sec:preliminary}


\subsection{Threat Model}
\label{subsec:threat}

\subsubsection{Attack Scenario}
We consider the standard local passive adversary that monitors the communication network traffic between the Tor client and the entry guard but cannot modify traffic packets, as shown in \figureautorefname~\ref{fig:torarch}.
The defense mechanisms can be deployed through a Pluggable Transport (PT) bridge~\cite{gong2023wfdefproxy, xie2025gapdis, shen2024real}, which acts as proxies between the Tor client and the entry guard to obfuscate traffic.

\subsubsection{Attacker Capability}
We consider a strong adversary with access to defended traffic traces from monitored webpages collected under conditions comparable to those experienced by a normal user.
We assume the adversary can collect both original Tor traffic at the client side (\textcircled{1} in \figureautorefname~\ref{fig:torarch}) and the corresponding defended traffic transmitted through the PT client and PT server (\textcircled{2} in \figureautorefname~\ref{fig:torarch}).

We consider a strong adversary who uses the collected traces to train either \emph{adversarial training-based WF attacks} or DAAE-based WF attacks. These two attack strategies represent increasing knowledge of the defense and enable us to evaluate whether a defense remains effective when the adversary explicitly adapts to traffic transformations for WF attacks.

\subsubsection{Defense Setting}
The \C does not require prior knowledge of the destination webpage. Thus, \C can be applied to arbitrary browsing webpages rather than a pre-configured set of static candidate traces.
Additionally, PTs are independently deployed by users. Therefore, the defense for different users does not rely on the same predefined payloads or destination-specific configurations. In this setting, \C can avoid introducing stable, webpage-dependent artifacts that an adaptive adversary can learn.

\subsection{Problem Formulation}
\label{sec:threatmodel}

The Tor network traffic is represented as a time-ordered sequence of packets. Formally, each traffic trace corresponding to the loading of a webpage is represented as: $w = [(t_{1}, d_{1}), (t_{2}, d_{2}), \ldots, (t_{N}, d_{N})]$. $N$ is the total number of website packets observed during the session. $t_{i} \in \{t_{1}, t_{2}, \dots, t_{N}\}$ indicates the timestamp at the $i$-th packet in the trace sequence. $d_{i} \in \{+1, -1\}$ denotes the direction of the $i$-th packet.

\begin{figure*}[t]
  \centering
  \includegraphics[scale=0.69]{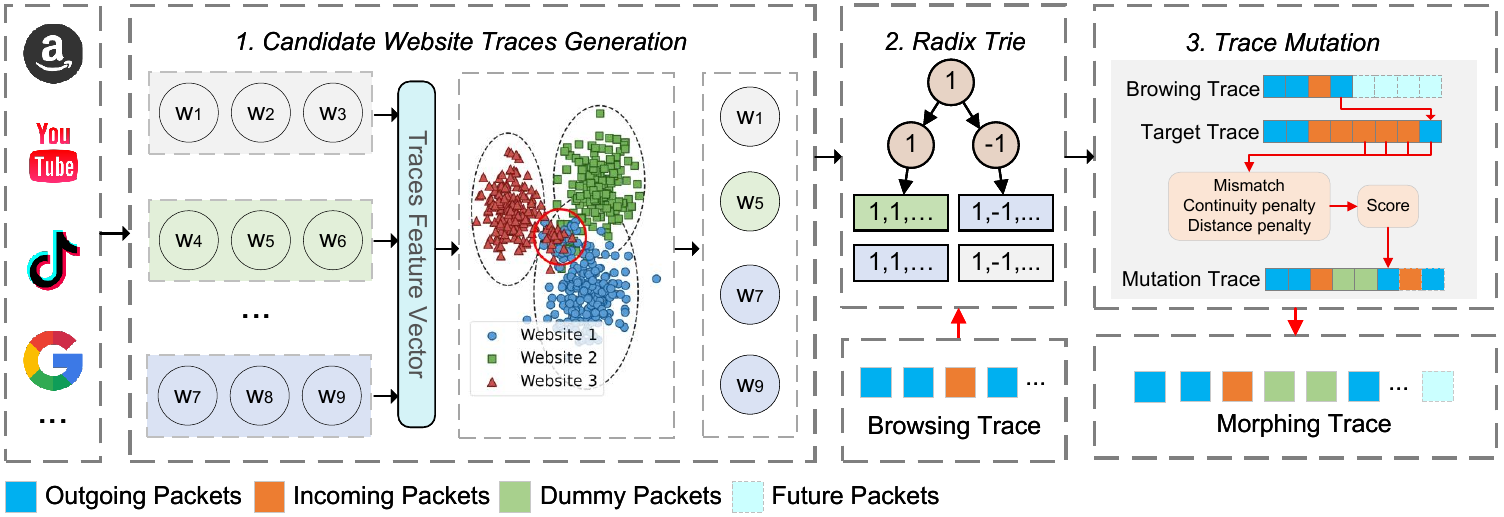}
  \caption{Overview \C architecture. 1) Separate the high intra-class diversity and low inter-class disparity data as morphing trace candidates. 2) Construct a radix trie based on the selected trace set. 3) Apply trace mutation to reconcile burst discrepancies between real-time traffic and target morphed traces.}
  \label{fig:chameleon}
  \vspace{-15pt}
\end{figure*}

Let $\mathcal{W}$ denote the monitored websites and $\mathcal{X}$ denote the space of observable traffic. 
An adversary aims to learn a classification model $h:\mathcal{X} \rightarrow \mathcal{W}$ that predicts the specific website from an observed trace. The WF defense system can be regarded as a stochastic transformation mechanism $\mathcal{D}: \mathcal{W} \rightarrow \mathcal{D}(\mathcal{X})$, where $\mathcal{D}(\mathcal{X})$ denotes a distribution over defended traces.
$\mathcal{R} \subseteq \mathcal{W} \times \mathcal{W}^{'}$ denotes the defense relation, where $(w,w^{'}) \in \mathcal{R}$ indicates that $w^{'}$ is a valid defended trace generated from the original trace $w$ by the defense mechanism.
A WF defense usually modifies the $w$ to $w^{'}$ by injecting dummy packets $d^{'}$ or increasing the timing delay $\tau$. 
In \C, we model the defense mechanism as a mapping between original traffic traces and defended traffic traces: $f: W \to W^{'}$. $W$ represents the set of all collected traffic traces. $W^{'}$ is the set of all manipulated traces derived from $W$. Formally, the $f: W \to W^{'}$ satisties:
\begin{equation}
\begin{aligned}
\forall w \in W,\ \exists w_1^{'}, w_2^{'} \in W^{'},\ w_1^{'} \neq w_2^{'},\\ \text{s.t. } (w, w_1^{'}) \in \mathcal{R},\ (w, w_2^{'}) \in \mathcal{R}.
\end{aligned}
\end{equation}
The many-to-many mapping enables multiple transformations per trace and partial overlap across classes, thereby disrupting both intra-class similarity and inter-class disparity. We provide a theoretical analysis of attacker success by analyzing the many-to-many mapping used by \C in \appendixautorefname~\ref{subsec:boundproof}.

Following the literature~\cite{shen2024real, khajavi2026lightening}, the bandwidth overhead of defense is measured as the ratio of the total amount of dummy data to the total amount of real packets, denoted as $B(w, w^{'})= \frac{|w^{'}|-|w|}{|w|}$. Similarly, the time overhead is measured as $T(w, w^{'})= \frac{t_{k}-t_{N}}{t_{N}}$, where $t_{k}$ is the time spent transmitting the defended trace $w^{'}$.

\section{System Design}
\label{sec:design}

\subsection{Candidate Website Traces Generation}
\label{subsec:set}

We observe that the existing WF attacks achieve high classification accuracy by learning feature representations with high intra-class similarity and high inter-class disparity. 
Therefore, to compromise the WF attacks, we need to obfuscate the intra-class and inter-class features. 
Specifically, we select samples that are more scattered inside the same class (high intra-class diversity) and closer to neighboring classes (low inter-class disparity). 
High intra-class diversity expands the apparent pattern of each class. If selected website traces from the same website vary significantly, the WF attack model needs to generalize over a wider region.
Meanwhile, low inter-class disparity reduces the margin between classes: traces from different websites become harder to separate. 
The two properties make learned decision boundaries less stable and increase confusion around class borders.

To depict trace candidates that exhibit high intra-class diversity and low inter-class disparity, we model traces with direction and timing features using a distance-based method.
The advantage is that \C can capture relationships among neighboring samples (\eg, from the same website) with minimal modeling assumptions. Additionally, distance-based neighborhood modeling remains robust when class manifolds are irregular or non-linear, which is suitable for extracting complex patterns from monitored website traffic traces.

\begin{algorithm}[t]
\caption{HiLoD algorithm}
\label{alg:traces_selection}

\KwIn{
$\textit{traces}$,
$\textit{labels}$,
neighbor size $k$,
selection ratio $\rho$
}
\KwOut{
$\textit{selected\_traces}$,
$\textit{selected\_trace\_labels}$}

$n \gets |\textit{traces}|$\;
\For{$i \gets 1$ \KwTo $n$}{
    $\mathbf{X}[i] \gets \textsc{TraceToFeature}(\textit{traces}[i])$\; 
}

$\boldsymbol{\mu} \gets \text{mean}(\mathbf{X}, \text{axis}=0)$, $\boldsymbol{\sigma} \gets \text{std}(\mathbf{X}, \text{axis}=0) + 10^{-6}$\; 
$\mathbf{X}_n \gets (\mathbf{X} - \boldsymbol{\mu}) / \boldsymbol{\sigma}$, $k_{\text{eff}} \gets \min(\max(3,k),\, n-1)$\;
$(\mathbf{dists}, \mathbf{nbrs}) \gets \textsc{Nearest Neighbors}(\mathbf{X}_n, k_{\text{eff}}+1)$; \\

Initialize $\textit{scores}[1..n] \gets 0$\;

\For{$i\gets1$ \KwTo $n$}{
    $\textit{nn\_idx}
    \gets \mathbf{nbrs}[i,2:k_{\mathrm{eff}}+1]$\;
    $\textit{nn\_dist}
    \gets \mathbf{dists}[i,2:k_{\mathrm{eff}}+1]$\;

    $\textit{nn\_lab}
    \gets \textit{labels}[\textit{nn\_idx}]$\;

    $\mathcal{S}_i \gets \{ j \in \{1,\ldots,k_{\text{eff}}\} : \textit{nn\_lab}[j] = \textit{labels}[i] \}$
    
    $\mathcal{D}_i \gets \{ j \in \{1,\ldots,k_{\text{eff}}\} : \textit{nn\_lab}[j] \neq \textit{labels}[i] \}$


    \eIf{$|\mathcal{S}_i|>0$}{

        $\textit{intra}_i \gets \operatorname{mean}(\textit{nn\_dist}[t] \mid t \in \mathcal{S}_i)$
    }{
        
        $\textit{intra}_i
        \gets \max(\textit{nn\_dist})$\;
    }

    \eIf{$|\mathcal{D}_i|>0$}{
    
        $\textit{inter}_i \gets \operatorname{mean}(\textit{nn\_dist}[t] \mid t \in \mathcal{D}_i)$
        
    }{
        $\textit{inter}_{i}
        \gets \max(\textit{nn\_dist})$\;
    }
    $\textit{mislead}_i \gets |\mathcal{D}_i| / k_{\text{eff}}$

    $\textit{scores}[i]
    \gets \textit{intra}_{i}
    - \textit{inter}_{i}
    + \textit{mislead}_{i}$\;
    
}

$m \gets \min(\lceil \rho * n \rceil, n)$\;

\Return \textit{selected\_website\_traces}, \textit{selected\_trace\_labels} $\gets$ Sorted Top-$m$ indices by descending $\textit{scores}$ \;

\end{algorithm}







In addition, a dedicated burst model is constructed from the direction sequence. \C detects direction change boundaries and computes lengths of consecutive same-direction packets. From these lengths, \C extracts burst mean, burst standard deviation, median, 90th percentile, burst count, and proportions of short bursts (\eg, shorter than 2) and long bursts (\eg, longer than 8). This provides a compact descriptor of patterns between inbound and outbound traffic, which are highly discriminative in Tor traffic analysis. 

The algorithm for generating high intra-class diversity and low inter-class disparity (HiLoD) website trace candidates is presented in \algorithmcfname~\ref{alg:traces_selection}.
To mitigate the potential information leakage from the morphing trace candidates, \C first converts each trace from the monitored dataset into a fixed-dimensional representation to support downstream distance-based modeling (lines 2-3 in \algorithmcfname~\ref{alg:traces_selection}). Specifically, \C decomposes the raw trace into three complementary information channels as the feature vector: traffic volume, temporal dynamics, and directional structure~\cite{li2018measuring}. Volume statistics include packet counts by direction (incoming and outgoing packets) and cumulative bytes by direction, as well as absolute-size descriptors (\eg, first 20 packets, last 30 packets, and standard deviation). Temporal dynamics are captured through duration ($t_{N}-t_{1}$) and inter-packet-time (IPT) summaries, including mean, standard deviation, quartiles, and extrema. This explicitly encodes both central tendency and dispersion of timing, which is important for distinguishing traffic classes that differ in burstiness or pacing.

Then, for each website trace, \C first estimates the average distance to same-class neighbors (lines 14-17 in \algorithmcfname~\ref{alg:traces_selection}). \C calculates the average distance to neighbors of different classes (lines 18-21). Finally, \C computes the mislead fraction of different-class neighbors (line 22). Line 23 defines the formula to calculate the score for prominent traces that are diverse within a class. In this formula, $intra$ denotes the average distance to same-class neighbors, where larger values indicate greater diversity within the class. Conversely, $inter$ represents the average distance to different-class neighbors, where smaller values suggest increased overlap across classes. A higher $score$ therefore corresponds to traces that are more ambiguous and harder for WF attack models to distinguish, making them less cleanly separable in the feature space. The candidate website traces are sorted by descending $score$ value.
\C selects the top-scoring subset that directly targets regions where classifiers are most error-prone. In line 24, parameter $\rho$ is the selection ratio, which controls the injectivity boundary for traffic morphing of selected website trace candidates.

\subsection{Radix Trie Construction}
\label{subsec:radix}

\begin{figure}[t]
  \centering
  \includegraphics[scale=0.42]{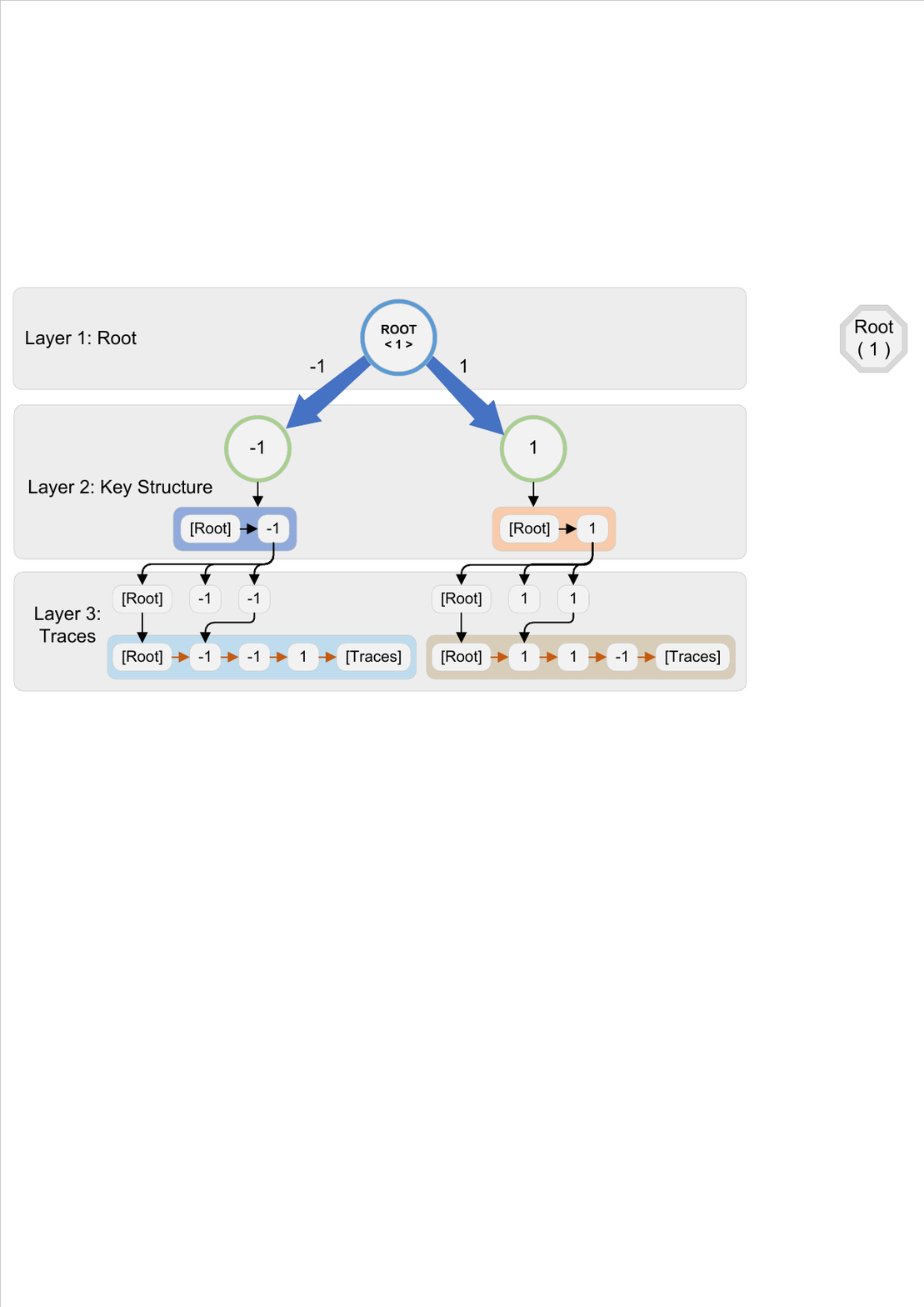}
  \caption{Visualization of Radix Trie. Traces are organized into a hierarchical key structure based on shared prefixes.}
  \label{fig:radix}
  \vspace{-15pt}
\end{figure}

With the generated candidate trace pool in Section~\ref{subsec:set}, we index the pool with a radix trie over binary direction sequences. As illustrated in \figureautorefname~\ref{fig:radix}, the radix trie is a compressed prefix-tree structure defined over directions $\{-1,+1\}$, where each root-to-node path corresponds to a direction prefix of the real-time browsing traffic. Following the traffic trace abstraction, each edge encodes a single outgoing or incoming packet step. At each node, we associate pool traces that share the same prefix, enabling sublinear candidate search rather than an exhaustive scan over variable-length traces.

However, prefix matching alone can be ambiguous: short prefixes may correspond to heterogeneous candidates, while long prefixes may overly constrain the search space. To regularize candidate selection, we construct a fixed-length embedding for each trace. Given a direction sequence $w = (d_{1}, \dots, d_{T})$ with $d_i \in \{-1,+1\}$, we truncate or zero-pad it to length $L$ to obtain $\mathbf{x} \in \mathbb{R}^L$. Then, we stack all embeddings to yield a matrix $\mathbf{X} \in \mathbb{R}^{N \times L}$. We then apply row-wise z-score normalization followed by $\ell_2$ normalization. The normalization design is described in \appendixautorefname~\ref{subsec:zscore}. To maintain coherence between the PT client and PT server in real-world deployment, the matching candidates need to exhibit distinct directions over the next few packets. The next few packets are used to verify short-term consistency between client and guard node traces by checking the sequence of packet directions. This constraint ensures that immediate temporal structure remains compatible, preventing abrupt inconsistencies during morphing.


During real-time transmission, packet direction sequences are determined from the packet departure timestamps at the PT endpoint rather than their delivery timestamps. Such a setting provides a consistent basis for synchronization between the PT client and PT server. The radix-trie search progressively narrows the candidate set as additional packet directions are observed, until less than, for example, 10 candidates remain. 
The endpoint that first determines the morphing trace then sends a short sequence of consecutive dummy packets as an implicit synchronization signal. Specifically, the number of dummy packets corresponds to the shortest prefix that uniquely distinguishes the selected trace from the remaining candidates. Upon receiving these dummy packets, the peer endpoint performs a prefix search on its local radix trie and identifies the same morphing trace without requiring an explicit trace identifier. Since each additional packet direction introduces another branching decision in the radix trie, the candidate space can decrease exponentially with the prefix length, allowing the selected trace to be uniquely identified using a bounded number of packet directions. 

Under the balancing branch implementation, when the candidate set contains at most 10 traces, the selected trace is typically uniquely identified by a prefix of 4-5 packet directions. Once the trace is identified, both endpoints apply the same morphing pattern to subsequent traffic, thereby maintaining consistent real-time morphing across the PT bridge. Meanwhile, a given real-world browsing trace will be mapped to different target traces across repeated visits. In this way, even when the same website is visited multiple times during adversarial training data collection, the resulting traces exhibit low inter-instance separability. Therefore, the attacker's ability to learn stable and discriminative patterns is compromised.

\subsection{Trace Morphing and Mutation}
\label{subsec:mutation}

\begin{figure}[t]
  \centering
  \includegraphics[scale=0.6]{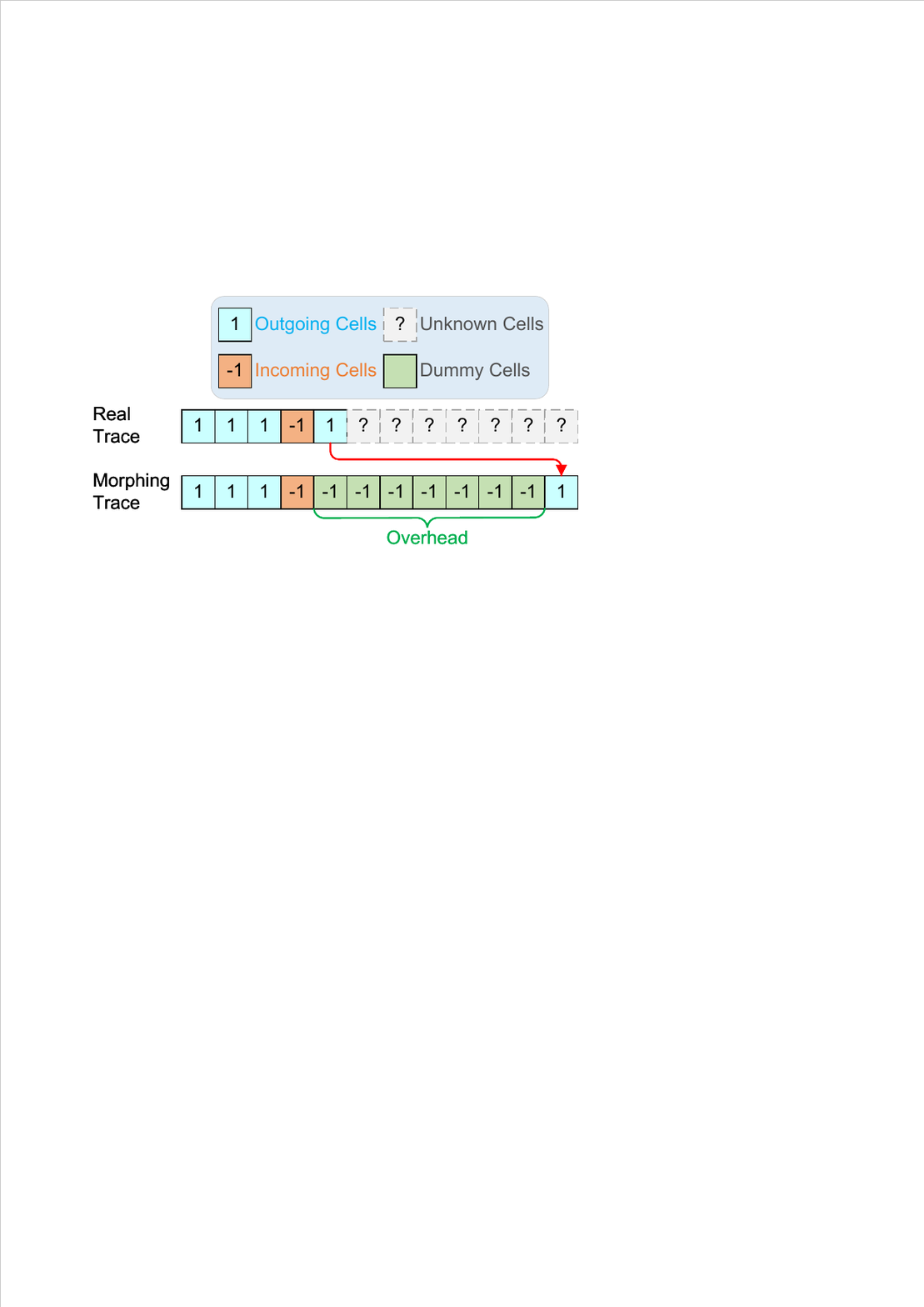}
  \caption{Visualization of Trace Morphing. Unknown packets correspond to unobserved traffic; Dummy packets are present in the morphing trace but absent from the real-time trace.}
  \label{fig:tracemorphing}
  \vspace{-14pt}
\end{figure}

After the radix trie construction, we can transform the real-time trace into a target morphing trace.
As illustrated in \figureautorefname~\ref{fig:tracemorphing}, the morphing trace is fully known in advance, whereas the real-time trace is only observable up to the time $t$, as future trace information remains unknown. 
Consequently, when the real-time captured packet deviates from the morphing trace in terms of packet sequence, the mismatch is compensated by inserting dummy packets derived from the morphing trace. However, these injected dummy packets introduce additional bandwidth consumption and may incur timing delays relative to the original real-time trace.

We categorize trace morphing into three categories: full-trace morphing, trace cut-off, and trace mutation. (i) In full-trace morphing, each real-time trace is fully morphed to match a selected target trace. If the trace completes while the target trace still contains a substantial number of remaining packets, the system continues transmitting dummy packets until the target trace reaches its end. Conversely, if the target trace terminates earlier while the real-time trace still contains pending packets, the remaining segment is reassigned to a newly selected target trace for continued transformation. Such a design maximizes indistinguishability at the cost of heavy overhead.

(ii) In the traffic trace cut-off scenario, the morphing process terminates when the real-time trace has been fully transformed, thereby avoiding the transmission of excessive dummy packets required to complete the target trace. Although this strategy reduces overhead, it may still introduce nontrivial padding and timing artifacts where there are long burst packets. As a result, the trace cut-off scenario introduces medium-level overhead.

(iii) We design the third morphing strategy, trace mutation, to further reduce overhead while preserving obfuscation effectiveness. As shown in \figureautorefname~\ref{fig:mutation}, when the target morphing trace exhibits a prolonged burst of packets whose direction differs from the real-time trace, the real-time trace will be temporarily buffered until a packet matches the direction in the target morphing trace.
Once such a match is encountered, the buffered packets can be released to the network. As mentioned in Section~\ref{subsec:radix}, morphing traces are randomly selected from a predefined morphing group. Consequently, the selected trace may not be the closest match to the underlying real-world traffic trace in terms of sequence similarity.
\C first performs a forward consistency check on the morphing trace and compares it with the real-time trace. Based on this comparison, \C identifies an appropriate position to flip exactly one direction bit in a short segment of the target morphing trace. After the flip, the target morphing trace's directions stay as close as possible to the corresponding real-time browsing directions.
In this way, we can reduce directional mismatch and mitigate buffering overhead. 

\begin{figure}[t]
  \centering
  \includegraphics[scale=0.6]{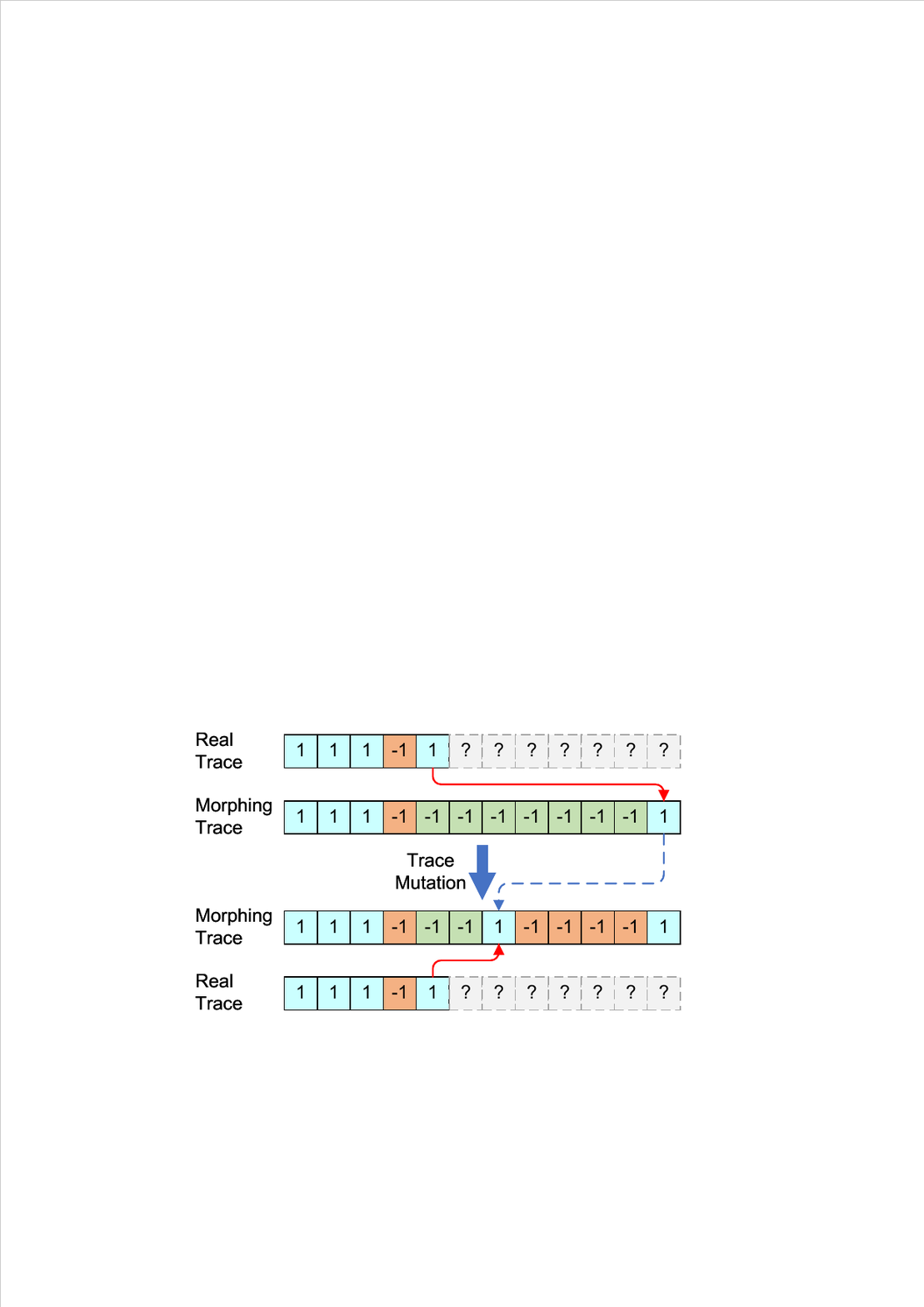}
  \caption{Visualization of Trace Mutation. }
  \label{fig:mutation}
  \vspace{-10pt}
\end{figure}

\begin{algorithm}[t]
\caption{Trace Mutation Algorithm}
\label{alg:tracemutation}
\KwIn{
$d^{target}$, 
$d^{in}$, 
$\ell$: the starting position of the trace mutation within the target morphing trace;
$j$: the real-time input packet location.
}
\KwOut{
$best\_loc$: best index selection for trace mutation.
}

\BlankLine
$n_{\mathrm{target}} \gets |d^{\mathrm{target}}|$, $n_{\mathrm{in}} \gets |d^{\mathrm{in}}|$\;

$idx_{\mathrm{left}} \gets \max(0,\ell)$\;
$idx_{\mathrm{right}} \gets \min(n_{\mathrm{target}}-1,\,\ell+M)$\;

\If{$idx_{\mathrm{left}} > idx_{\mathrm{right}}$}{
  \Return $\mathrm{clip}(\ell,\,0,\,n_{\mathrm{target}}-1)$\;
}

$idx_{\mathrm{w}} \gets \min(idx_{\mathrm{right}},\, n_{\mathrm{in}}-1)$\;

$W^{\mathrm{in}} \gets d^{\mathrm{in}}$, $W^{\mathrm{target}} \gets d^{\mathrm{target}}[idx_{\mathrm{left}} : idx_{\mathrm{w}}]$\;

$idx^{'} \gets idx_{\mathrm{left}}$, $score^{'} \gets +\infty$\;

\For{$i \gets idx_{\mathrm{left}}$ \KwTo $idx_{\mathrm{w}}$}{
  $W \gets \mathrm{copy}(W^{\mathrm{target}})$\;
  $W[i - idx_{\mathrm{left}}] \gets -W[i - idx_{\mathrm{left}}]$\;

  $\mathrm{mismatch} \gets \sum_{j} \mathbb{1}[W[j] \neq W^{\mathrm{in}}[j]]$\;

  $\mathrm{distance\_penalty} \gets |i - \ell|*0.25$\;

  $\mathrm{continuity\_penalty} \gets 0$\;
  \If{$i - 1 \ge 0$}{
    \eIf{$i - 1 \ge idx_{\mathrm{left}}$}{
      $\mathrm{continuity\_penalty} \gets (\mathrm{continuity\_penalty} + \mathbb{1}[W[i-1-idx_{\mathrm{left}}] \neq W[i-idx_{\mathrm{left}}]])/2$\;
    }{
      $\mathrm{continuity\_penalty} \gets (\mathrm{continuity\_penalty} + \mathbb{1}[d^{\mathrm{target}}[i-1] \neq W[i-idx_{\mathrm{left}}]])/2$\;
    }
  }
  \If{$i + 1 < n_{\mathrm{target}}$}{
    \eIf{$i + 1 \le idx_{\mathrm{w}}$}{
      $\mathrm{continuity\_penalty} \gets (\mathrm{continuity\_penalty} + \mathbb{1}[W[i+1-idx_{\mathrm{left}}] \neq W[i-idx_{\mathrm{left}}]])/2$\;
    }{
      $\mathrm{continuity\_penalty} \gets (\mathrm{continuity\_penalty} + \mathbb{1}[d^{\mathrm{target}}[i+1] \neq W[i-idx_{\mathrm{left}}]])/2$\;
    }
  }

  $score \gets \mathrm{mismatch} + \mathrm{distance\_penalty} +  \mathrm{continuity\_penalty}$\;

  \If{$score < score^{'}$}{
    $score^{'} \gets score$, $idx^{'} \gets i$\;
  }
}

\Return $idx^{'}$\;
\end{algorithm}

To mitigate the overhead arising from this mismatch, we introduce a trace mutation algorithm in \algorithmcfname~\ref{alg:tracemutation}. 
Specifically, $d^{target}$ represents the direction of the target morphing trace. $d^{in}$ denotes the direction of the real-time browsing trace. In the trace mutation process, only the directions from the trace are considered. The parameter $\ell$ indicates the starting position of the trace mutation within the target morphing trace. A window size $M$ defines the valid candidate region for mutation around $\ell$ (line 3), restricting the search space to feasible positions. Then, \C iterates over all candidate indices within the window to identify the optimal mutation location (line 9). For each candidate position $i$, \C computes a score to find the best position to execute a trace mutation. 
The smaller score represents the closest position with low overhead and less information leakage.

The computation of the score includes three components. \textbf{First}, the mismatch term quantifies the inconsistency between the mutated target segment and the real-time input trace, reflecting alignment quality (line 12). A higher mismatch indicates greater deviation and potential bandwidth inefficiency due to increased dummy padding. \textbf{Second}, the distance penalty encourages mutations closer to the intended starting position $\ell$ and discourages excessive displacement within the morphing trace (line 13). \textbf{Third}, the continuity penalty is computed to enforce local structural smoothness (lines 15-24). This term aggregates up to two indicator-based comparisons between the mutated symbol at position $i$ and its adjacent neighbors. The continuity penalty discourages abrupt directional flips and promotes locally coherent trace patterns. Finally, \C selects the candidate index with the minimum score as the optimal mutation location. A lower score indicates a better trade-off between trace alignment, spatial consistency, and structural smoothness.

\section{Performance Evaluation}
\label{sec:evaluation}


\subsection{Experiment Setup}
\label{subsec:experiment}

\subsubsection{Dataset}
\label{subsec:dataset}

We conduct experiments on three widely used datasets, DF~\cite{sirinam2018deep} and DS-19~\cite{gong2020zero}, and the GTT23 dataset~\cite{jansen2026measurement}. 

The DF dataset contains both closed-world and open-world data. In the DF closed-world dataset, traffic traces are collected from the homepages of the top 100 Alexa-ranked websites, of which 95 sites remain after preprocessing, with 1,000 instances per site. 
In the open-world dataset, the unmonitored set is constructed from websites ranked within Alexa's top 50,000, excluding those used in the closed-world setting, resulting in 40,716 sites with one instance per site. All traces are collected using the Tor Browser Crawler \cite{juarez2014critical}, with network traffic captured independently via \textit{tcpdump}.

The DS-19 dataset~\cite{gong2020zero} is collected using Tor Browser 8.5a7 with Tor version 0.4.0.1-alpha, automated through command-line execution. The monitored set consists of the homepages of the top 100 Alexa-ranked websites, each visited 100 times. The unmonitored set includes 10,000 websites, with a single instance recorded per site. 

The GTT23~\cite{jansen2026measurement} dataset is the first large-scale genuine website fingerprinting dataset collected from a Tor exit relay, containing approximately ($1.4\times10^{7}$) website visits from over ($1.1\times10^{6}$) domains. We use CELLSHIFT~\cite{jansen2026cellshift} to construct the closed-world dataset with 100 domains, each containing 2,000 instances, and an additional ($1.0\times10^{4}$) domains for the open-world evaluation.

\subsubsection{WF Attacks} We select five SOTA WF attack models as our benchmark. The attacks are listed as follows:

\begin{itemize}
    \item \textbf{DF}~\cite{sirinam2018deep}. Deep Fingerprinting (DF) is a DL–based WF attack that uses CNNs to learn distinctive patterns from traffic trace direction automatically.
    \item \textbf{Var-CNN}~\cite{bhat2018var}. Var-CNN employs a ResNet-based architecture to extract features from packet sequences.
    \item \textbf{RF}~\cite{shen2023subverting}. Robust Fingerprinting (RF) is a CNN-based WF attack that leverages a Traffic Aggregation Matrix (TAM) to capture informative features leaked from Tor traces.
    \item \textbf{NetCLR}~\cite{bahramali2023realistic}. NetCLR employs both semi-supervised and self-supervised learning to augment network traces.
    \item \textbf{Holmes}~\cite{deng2024robust}. Holmes leverages temporal and spatial traffic distributions with contrastive learning to identify websites from partial webpage loading traffic.
\end{itemize}

\subsubsection{WF Defense Baselines}
To make a comprehensive comparison, we select six SOTA WF defense mechanisms as baselines. The defense models are listed as follows:

\begin{itemize}
    \item \textbf{FRONT~\cite{gong2020zero}}. FRONT is a lightweight defense mechanism by introducing trace-to-trace randomness through the randomized insertion of dummy packets.
    \item \textbf{ALERT~\cite{qiao2024trace}}. ALERT generates adversarial perturbations without requiring knowledge of individual traffic traces and produces user-specific universal perturbations.
    \item \textbf{Palette~\cite{shen2024real}}. Palette prevents adversaries from distinguishing among similar websites within a cluster, thereby providing stronger anonymity guarantees.
    \item \textbf{GAPDiS~\cite{xie2025gapdis}}. GAPDiS is an adversarial perturbation-based WF defense method.
    \item \textbf{FRUGAL~\cite{wang2026cease}}. FRUGAL is a traffic obfuscation defense mechanism that minimizes the mutual information between website traffic and webpage labels by strategically injecting dummy packets.
    \item \textbf{Adaptive Tamaraw~\cite{khajavi2026lightening}}. Adaptive Tamaraw integrates regularization-based defenses with dynamic clustering to adjust defense parameters.
\end{itemize}

\subsubsection{Open Science}
To support reproducibility and facilitate further research, we will release the implementation later, including trace processing scripts and evaluation frameworks. We also provide detailed descriptions of dataset construction and experimental settings. The dataset used in this work is publicly available, and we provide detailed preprocessing procedures in the supplementary material. We also report all hyperparameters and implementation details necessary to reproduce the experimental results, as shown in \appendixautorefname~\ref{sec:appendparameter}.

\subsection{Performance and Overhead Comparison}
\label{subsec:performance}

\subsubsection{Closed-world Scenario}

\begin{table*}[t]
\centering
\small
\setlength{\tabcolsep}{1.6pt}
\renewcommand{\arraystretch}{1.2}

\caption{Evaluation of overhead and defense against adversarial training-based attacks across six SOTA WF defenses.}
\label{tab:closedworld}
\scalebox{0.83}{
\begin{tabular}{p{1.6cm}|cc|cccc|cc|cccc}
\toprule
\textbf{\makecell{Dataset}}                 &  \multicolumn{6}{c|}{\cellcolor{olive!15}\textbf{DF~\cite{sirinam2018deep}}}           & \multicolumn{6}{c}{\cellcolor{orange!20}\textbf{DS-19~\cite{gong2020zero}}}        \\
\makecell{\multirow{2}{*}{\textbf{Defenses}}} & \multicolumn{2}{c}{\textbf{ Overhead (\%)}} & \multicolumn{4}{c|}{\textbf{Attack Accuracy (\%)}} & \multicolumn{2}{c}{\textbf{ Overhead (\%)}} & \multicolumn{4}{c}{\textbf{Attack Accuracy (\%)}} \\
\cmidrule(r){2-3} \cmidrule(r){4-7} \cmidrule(r){8-9} \cmidrule(r){10-13}
& Bandwidth & Time & DF  & Var-CNN  & RF & NetCLR & Bandwidth & Time & DF  & Var-CNN  & RF & NetCLR \\

\midrule
\makecell{NoDef} & 0.0 & 0.0 &98.33 & 98.83 & 98.93 & 98.53 & 0.0 & 0.0 & 97.23 & 96.75 & 90.94 & 97.50   \\

\rowcolor{gray!10}
\makecell{FRONT} & \cellcolor{olive!10} 64.46
& \cellcolor{olive!10}0.0
& \cellcolor{olive!10}95.83 {\footnotesize{\color{cyan}{$\downarrow$2.50}}}
& \cellcolor{olive!10}98.15 {\footnotesize{\color{cyan}{$\downarrow$0.68}}} 
& \cellcolor{olive!10}98.20 {\footnotesize{\color{cyan}{$\downarrow$0.73}}}
& \cellcolor{olive!10}95.93  {\footnotesize{\color{cyan}{$\downarrow$2.60}}}
& \cellcolor{orange!10}31.68
& \cellcolor{orange!10}0.0
& \cellcolor{orange!10}96.38 {\footnotesize{\color{cyan}{$\downarrow$0.85}}}
& \cellcolor{orange!10}94.78  {\footnotesize{\color{cyan}{$\downarrow$1.97}}}
& \cellcolor{orange!10}89.17 {\footnotesize{\color{cyan}{$\downarrow$1.77}}}
& \cellcolor{orange!10}96.50 {\footnotesize{\color{cyan}{$\downarrow$1.00}}} \\


\makecell{ALERT} & 25.03
& 201.34
& 69.43 {\footnotesize{\color{cyan}{$\downarrow$28.90}}}
& 75.99 {\footnotesize{\color{cyan}{$\downarrow$22.84}}} 
& 88.03 {\footnotesize{\color{cyan}{$\downarrow$10.90}}}
& 69.57 {\footnotesize{\color{cyan}{$\downarrow$28.96}}}
&  35.44
&  158.23
& 57.16 {\footnotesize{\color{cyan}{$\downarrow$40.07}}}
& 69.59 {\footnotesize{\color{cyan}{$\downarrow$21.16}}}
& 74.46 {\footnotesize{\color{cyan}{$\downarrow$16.48}}}
& 71.57 {\footnotesize{\color{cyan}{$\downarrow$25.96}}} \\


\rowcolor{gray!10}
\makecell{Palette}& \cellcolor{olive!10} 82.16
& \cellcolor{olive!10} 8.18
& \cellcolor{olive!10} 37.25 {\footnotesize{\color{orange}{$\downarrow$61.08}}}
& \cellcolor{olive!10} 45.02 {\footnotesize{\color{orange}{$\downarrow$53.81}}} 
& \cellcolor{olive!10} 51.41 {\footnotesize{\color{cyan}{$\downarrow$47.52}}}
& \cellcolor{olive!10} 31.77 {\footnotesize{\color{orange}{$\downarrow$66.76}}}
& \cellcolor{orange!10} 75.92
& \cellcolor{orange!10} 13.67
& \cellcolor{orange!10} 10.73 {\footnotesize{\color{orange}{$\downarrow$86.50}}}
& \cellcolor{orange!10} 12.38 {\footnotesize{\color{orange}{$\downarrow$84.37}}}
& \cellcolor{orange!10} 32.66 {\footnotesize{\color{orange}{$\downarrow$58.28}}}
& \cellcolor{orange!10} 11.22 {\footnotesize{\color{orange}{$\downarrow$86.31}}}  \\


\makecell{GAPDiS} & 17.96
& 212.41
& 91.87 {\footnotesize{\color{cyan}{$\downarrow$6.46}}}
& 95.17 {\footnotesize{\color{cyan}{$\downarrow$3.66}}}
& 92.38 {\footnotesize{\color{cyan}{$\downarrow$6.55}}}
& 88.23 {\footnotesize{\color{cyan}{$\downarrow$10.30}}}
& 36.77
& 171.36
& 77.04 {\footnotesize{\color{cyan}{$\downarrow$20.19}}}
& 87.16 {\footnotesize{\color{cyan}{$\downarrow$9.59}}}
& 81.02 {\footnotesize{\color{cyan}{$\downarrow$9.92}}}
& 87.36 {\footnotesize{\color{cyan}{$\downarrow$10.17}}} \\

\rowcolor{gray!10}
\makecell{FRUGAL} & \cellcolor{olive!10} 80.00
& \cellcolor{olive!10} 0.0
& \cellcolor{olive!10} 96.59 {\footnotesize{\color{cyan}{$\downarrow$1.74}}} 
& \cellcolor{olive!10} 96.25 {\footnotesize{\color{cyan}{$\downarrow$2.58}}}
& \cellcolor{olive!10} 30.96 {\footnotesize{\color{orange}{$\downarrow$67.97}}}
& \cellcolor{olive!10} 96.78 {\footnotesize{\color{cyan}{$\downarrow$1.75}}}
& \cellcolor{orange!10} 80.00
& \cellcolor{orange!10} 0.0
& \cellcolor{orange!10} 95.42 {\footnotesize{\color{cyan}{$\downarrow$1.81}}}
& \cellcolor{orange!10} 94.63 {\footnotesize{\color{cyan}{$\downarrow$2.12}}}
& \cellcolor{orange!10} 11.04 {\footnotesize{\color{orange}{$\downarrow$79.90}}}
& \cellcolor{orange!10} 95.97 {\footnotesize{\color{cyan}{$\downarrow$1.53}}} \\

\makecell{Adaptive \\ Tamaraw} & 198.23
& 36.62
& 28.47 {\footnotesize{\color{orange}{$\downarrow$67.36}}}
& 33.79 {\footnotesize{\color{orange}{$\downarrow$ 65.04}}} 
& 37.26 {\footnotesize{\color{orange}{$\downarrow$61.67}}}
& 25.27 {\footnotesize{\color{orange}{$\downarrow$73.26}}}
& 140.17
& 18.01
& 13.07 {\footnotesize{\color{orange}{$\downarrow$84.16}}}
& 14.37 {\footnotesize{\color{orange}{$\downarrow$82.38}}}
& 19.36 {\footnotesize{\color{orange}{$\downarrow$71.58}}}
& 12.73 {\footnotesize{\color{orange}{$\downarrow$84.77}}} \\

\rowcolor{gray!10}
\makecell{\textbf{\C}} & \cellcolor{olive!10}130.60
& \cellcolor{olive!10} 14.51
& \cellcolor{olive!10}\textbf{19.03} {\footnotesize{\color{red}{$\downarrow$79.30}}}
& \cellcolor{olive!10}\textbf{26.23} {\footnotesize{\color{red}{$\downarrow$72.60}}}
& \cellcolor{olive!10}\textbf{23.57} {\footnotesize{\color{red}{$\downarrow$75.36}}} 
& \cellcolor{olive!10}\textbf{19.72} {\footnotesize{\color{red}{$\downarrow$76.21}}}
& \cellcolor{orange!10}72.48
& \cellcolor{orange!10} 9.97
& \cellcolor{orange!10}\textbf{7.24}  {\footnotesize{\color{red}{$\downarrow$89.99}}}
& \cellcolor{orange!10}\textbf{6.50} {\footnotesize{\color{red}{$\downarrow$90.25}}}
& \cellcolor{orange!10}\textbf{6.77} {\footnotesize{\color{red}{$\downarrow$84.17}}}
& \cellcolor{orange!10}\textbf{6.40} {\footnotesize{\color{red}{$\downarrow$91.10}}} \\

\bottomrule
\end{tabular}
}
\end{table*}

We first evaluate the effectiveness and robustness of our proposed defense against four SOTA WF attacks under the closed-world setting using four representative attack models on the DF dataset, which contains 95 website classes with 1,000 traces per class, and the DS-19 dataset, which contains 100 website classes with 100 traces per class. As shown in \tableautorefname~\ref{tab:closedworld}, all attacks achieve near-perfect performance without defenses, exceeding 98\% on the DF dataset and remaining above 90\% on the DS-19 dataset. This confirms that the SOTA WF attacks can reliably infer user activity in the absence of effective countermeasures.

From \tableautorefname~\ref{tab:closedworld}, we can see that SOTA defense models such as FRONT, GAPDiS, and FRUGAL show limited robustness under adversarial training-based attacks. They marginally degrade attack performance, with accuracies still exceeding 90\% on the DF dataset and remaining above 80\% on the DS-19 dataset. For example, although FRONT introduces no additional packet delay and incurs only moderate bandwidth overhead, it reduces the DF attack model's accuracy from 98.33\% to merely 95.83\%, indicating minimal defensive effectiveness against adversarial training-based attacks. The weaker performance of FRUGAL compared with its originally reported results is mainly due to our evaluation setting, where 90\% of the defended traces are used for training and 10\% for testing, providing the attacker with substantially more training data for adversarial training.


Palette~\cite{shen2024real} and Adaptive Tamaraw~\cite{khajavi2026lightening} demonstrate stronger resilience by reshaping the trace pattern distribution. Palette reduces attack accuracy by at least 47.52\% on the DF dataset and 58.28\% on the DS-19 dataset, while Adaptive Tamaraw further improves robustness, achieving reductions of at least 61.67\% and 71.58\%, respectively. This highlights the advantage of these two defense models in defending against adversarial training-based attacks. However, Adaptive Tamaraw achieves stronger protection at the expense of considerably higher overhead.

In comparison, our approach \C consistently outperforms all baselines across both datasets on four SOTA attack models. \C keeps attack accuracy below 27\% across all four attacks on the DF dataset. On average, \C reduces the accuracy of SOTA WF attacks by 76.52\% with moderate overhead on the DF dataset. On the DS-19 dataset, \C reduces accuracy as low as 6.40\%, corresponding to over 91.10\% reduction relative to the undefended baseline (NoDef) with lower time overhead. 

Importantly, \C achieves strong defense performance with moderate overhead, incurring 130.60\% bandwidth overhead and only 14.51\% time overhead on the DF dataset.
We found that the DF dataset exhibits substantial variation in trace length across classes, which leads \C to select highly similar candidate traces at the cost of increased padding and higher overhead under imperfect alignment.
In contrast, \C achieves 72.48\% bandwidth with 9.97\% time overhead on the DS-19 dataset, offering a significantly better accuracy–efficiency trade-off than prior defenses. The DS-19 dataset presents more uniform trace lengths, enabling more efficient morphing with even reduced overhead. Overall, these results demonstrate that \C provides substantially stronger and more stable robustness against adversarially trained WF attacks while maintaining practical deployment overhead.

\subsubsection{Open-world Scenario}
\label{subsec:openworld}

\begin{table*}[t]
\centering
\small
\setlength{\tabcolsep}{1.6pt}
\renewcommand{\arraystretch}{1.2}

\caption{Evaluation of adversarial training effectiveness across defenses using DF~\cite{sirinam2018deep} datasets in the open-world scenario.}
\label{tab:openworld}

\scalebox{0.83}{
\begin{tabular}{p{1.9cm}|cccc|cccc|cccc|cccc}
\toprule
\cellcolor{white} & \multicolumn{4}{c|}{\cellcolor{olive!25}\textbf{DF}} & \multicolumn{4}{c|}{\cellcolor{orange!25}\textbf{Var-CNN}} & \multicolumn{4}{c|}{\cellcolor{cyan!25}\textbf{ RF}} & \multicolumn{4}{c}{\cellcolor{pink!25}\textbf{NetCLR}} \\
\cmidrule(r){2-5} \cmidrule(r){6-9} \cmidrule(r){10-13} \cmidrule(r){14-17}
\multirow{-2}{*}{\diagbox[width=1.9cm]{\textbf{Defenses}}{\textbf{Attacks}}}  & \cellcolor{olive!25} TPR (\%) & \cellcolor{olive!25} FPR (\%) & \cellcolor{olive!25} F1  & \cellcolor{olive!25} P (\%)  & \cellcolor{orange!25} TPR (\%)  & \cellcolor{orange!25} FPR (\%) & \cellcolor{orange!25} F1 (\%) & \cellcolor{orange!25} P (\%) & \cellcolor{cyan!25} TPR (\%) & \cellcolor{cyan!25} FPR (\%) & \cellcolor{cyan!25} F1 (\%) & \cellcolor{cyan!25} P (\%)  & \cellcolor{pink!25} TPR (\%)  & \cellcolor{pink!25} FPR (\%) & \cellcolor{pink!25} F1 (\%) & \cellcolor{pink!25} P (\%) \\

\midrule
\makecell{NoDef} & 97.71 
& 3.34
& 98.13
& 98.56
& 98.25
& 1.68
& 98.76
& 98.22
& 93.59 
& 12.37
& 94.11
& 94.64
& 98.24
& 1.62
& 98.77
& 98.31 \\

\rowcolor{gray!25}
\makecell{FRONT} & \cellcolor{olive!15} 94.37
& \cellcolor{olive!15} 8.10
& \cellcolor{olive!15} 95.31
& \cellcolor{olive!15} 96.29
& \cellcolor{orange!15} 97.16
& \cellcolor{orange!15} 2.87
& \cellcolor{orange!15} 97.91
& \cellcolor{orange!15} 98.07
& \cellcolor{cyan!15} 70.20
& \cellcolor{cyan!15} 19.87
& \cellcolor{cyan!15} 78.35
& \cellcolor{cyan!15} 88.75
& \cellcolor{pink!15} 94.34
& \cellcolor{pink!15} 4.83
& \cellcolor{pink!15} 96.01
& \cellcolor{pink!15} 97.73 \\

\makecell{ALERT} & 68.86
& 3.19
& 80.75
& 97.35
& 78.70
& 2.46
& 86.82
& 95.87
& 81.58
& 3.67
& 88.07
& 94.75
& 69.87
& 2.27
& 81.20
& 96.05
\\

\rowcolor{gray!25}
\makecell{Palette}& \cellcolor{olive!15} 30.92
& \cellcolor{olive!15} 42.01
& \cellcolor{olive!15} 41.52
& \cellcolor{olive!15} 62.98
& \cellcolor{orange!15} 36.87
& \cellcolor{orange!15} 38.47
& \cellcolor{orange!15} 48.09
& \cellcolor{orange!15} 69.13
& \cellcolor{cyan!15} 42.33
& \cellcolor{cyan!15} 31.26
& \cellcolor{cyan!15} 54.36
& \cellcolor{cyan!15} 75.94
& \cellcolor{pink!15} 24.02
& \cellcolor{pink!15} 43.30
& \cellcolor{pink!15} 33.90
& \cellcolor{pink!15} 57.76
\\

\makecell{GAPDiS} & 97.54
& 3.77
& 97.96
& 98.39
& 97.16
& 3.04
& 97.91
& 98.17
& 9.58
& 72.67
& 13.62
& 23.31
& 97.38
& 2.72 
& 98.09
& 98.51 \\

\rowcolor{gray!25}
\makecell{FRUGAL} & \cellcolor{olive!15} 96.22
& \cellcolor{olive!15} 6.29
& \cellcolor{olive!15} 96.74
& \cellcolor{olive!15} 97.11
& \cellcolor{orange!15} 95.54
& \cellcolor{orange!15} 4.87
& \cellcolor{orange!15} 96.69
& \cellcolor{orange!15} 97.98
& \cellcolor{cyan!15} 9.04
& \cellcolor{cyan!15} 45.42
& \cellcolor{cyan!15} 14.07
& \cellcolor{cyan!15} 80.82
& \cellcolor{pink!15} 95.70
& \cellcolor{pink!15}  4.25
& \cellcolor{pink!15}  96.90
& \cellcolor{pink!15}  94.90 \\

\makecell{Adaptive \\ Tamaraw} & 11.06
& 5.84
& 18.99
& 66.88
& 16.05
& 2.96
& 27.08
& 78.86
& 19.01
& 1.97
& 31.79
& 86.03
& 13.28
& 1.43
& 23.11
& 88.43 \\

\rowcolor{gray!25}
\makecell{\textbf{\C} } & \cellcolor{olive!15} 13.34
& \cellcolor{olive!15} 7.27
& \cellcolor{olive!15} 22.92
& \cellcolor{olive!15} 54.31
& \cellcolor{orange!15} 8.45
& \cellcolor{orange!15} 11.77
& \cellcolor{orange!15} 14.89
& \cellcolor{orange!15} 31.29
& \cellcolor{cyan!15} 11.04
& \cellcolor{cyan!15} 5.65
& \cellcolor{cyan!15} 19.47
& \cellcolor{cyan!15} 17.91
& \cellcolor{pink!15} 17.72
& \cellcolor{pink!15} 10.40
& \cellcolor{pink!15} 26.16
& \cellcolor{pink!15} 49.79
\\

\bottomrule
\end{tabular}
}
\vspace{-10pt}
\end{table*}

In an open-world scenario, the adversary can only know and monitor a subset of the websites accessed by the user. Traffic traces collected from monitored websites are used to train a classifier that determines whether a given trace corresponds to one of the monitored sites. This setting is inherently more challenging for an adversary to execute a WF attack.
To simulate the open-world scenario, for the DF dataset, we construct the monitored set using all websites in the closed-world dataset, while treating all websites in the open-world dataset as unmonitored. Specifically, we randomly sample 900 traces per monitored website and 40,716 traces from the unmonitored set. Similarly, for the DS-19 dataset, we sample 90 traces per monitored website and 10,000 traces from the unmonitored set. The two datasets are partitioned into training and validation subsets with a 9:1 ratio.

\begin{figure*}[htbp]
    \centering
    \includegraphics[width=0.24\textwidth]{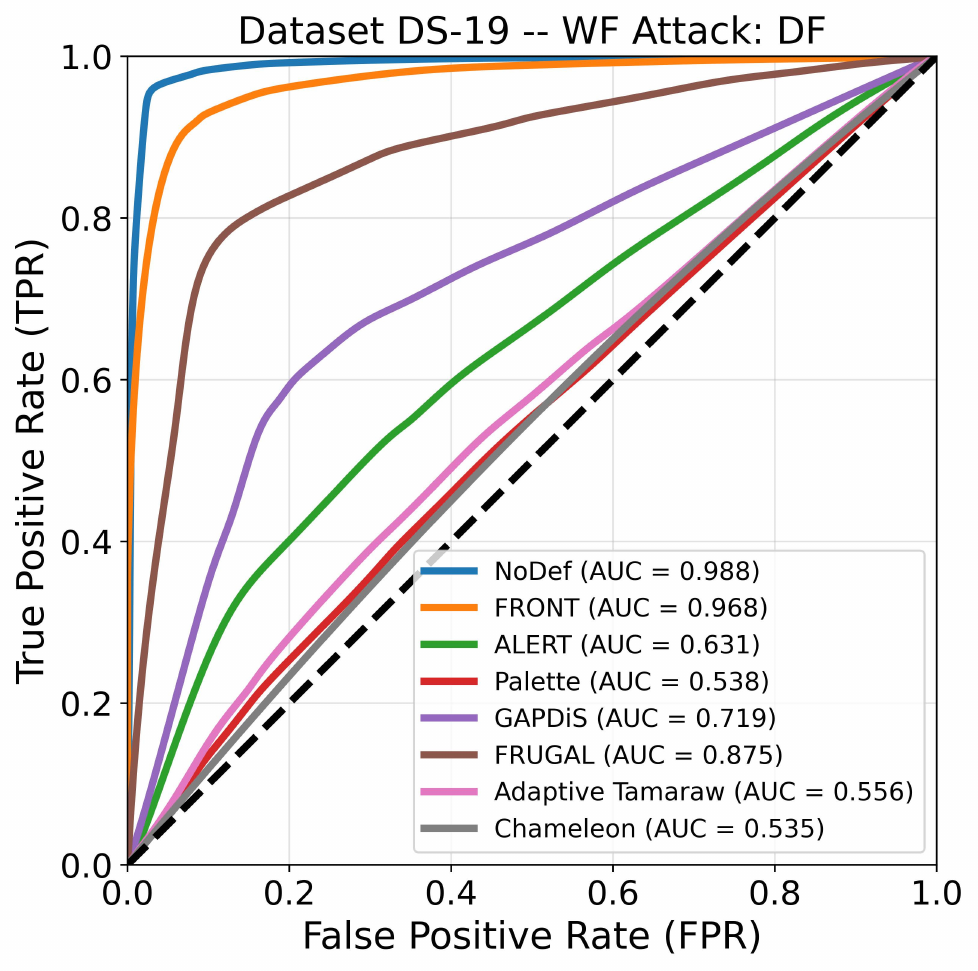}
    \hfill
    \includegraphics[width=0.24\textwidth]{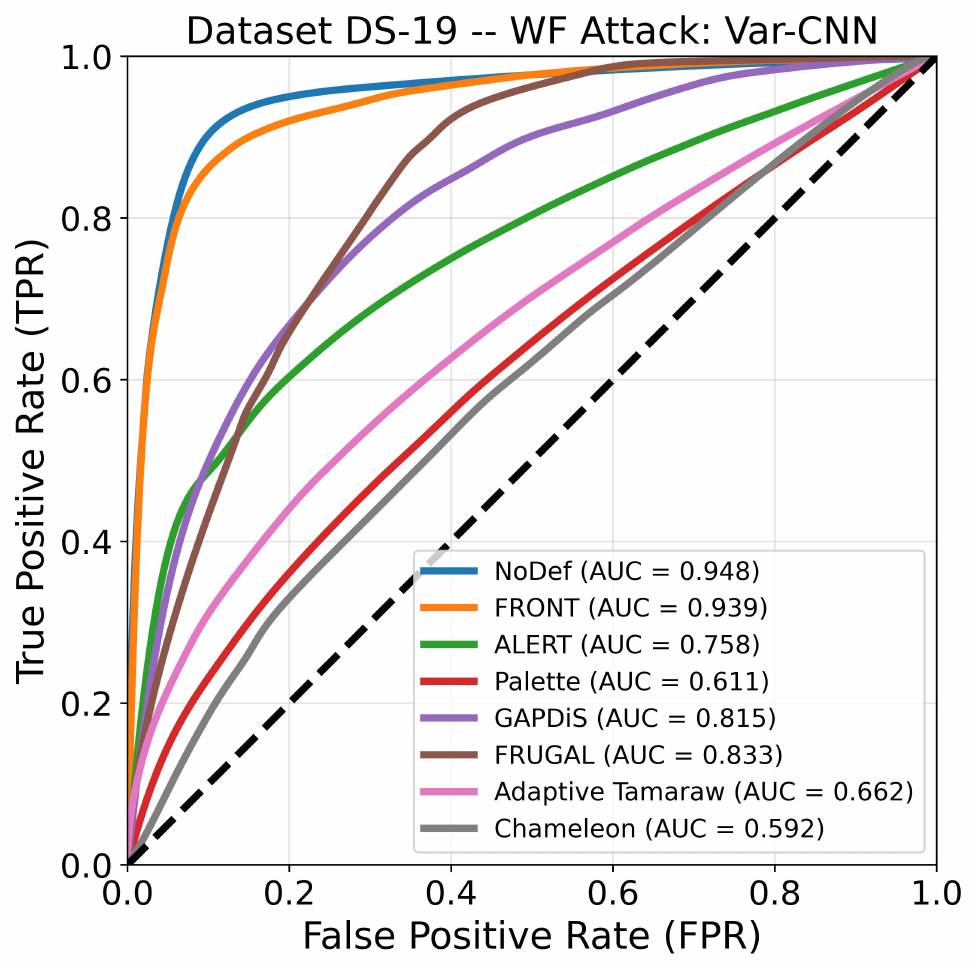}
    \hfill
    \includegraphics[width=0.24\textwidth]{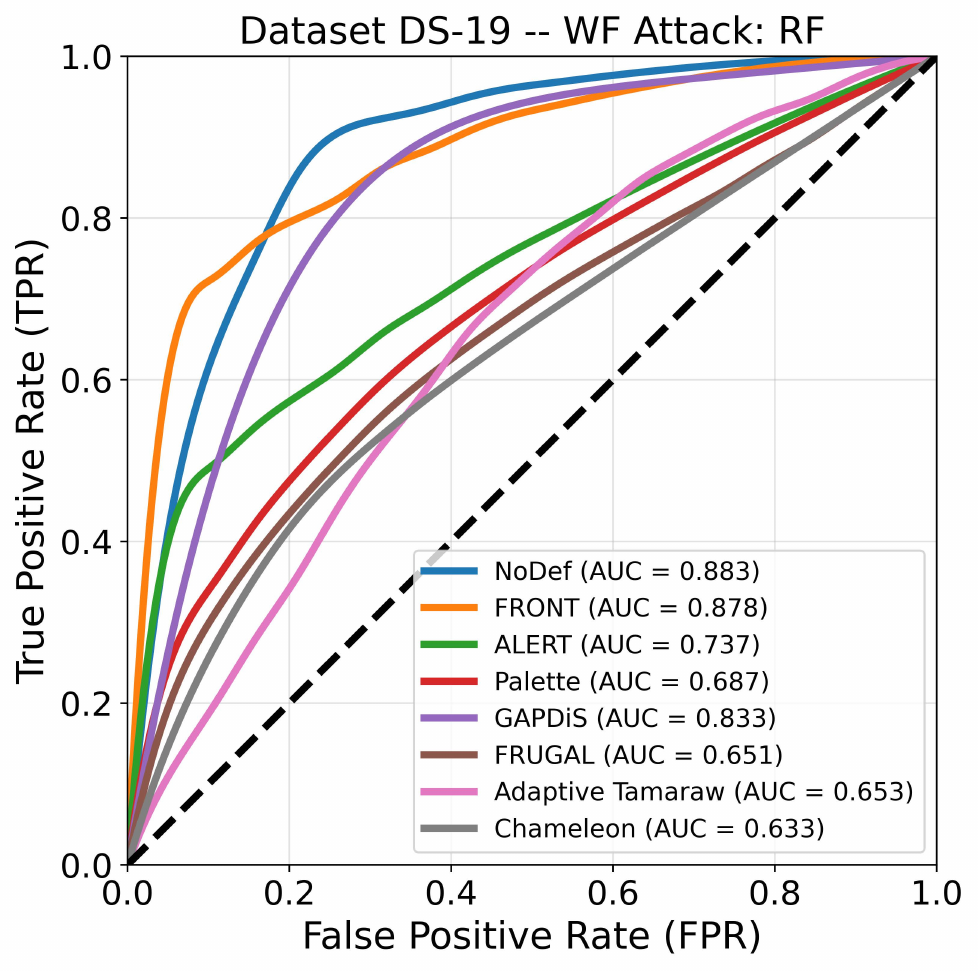}
    \hfill
    \includegraphics[width=0.24\textwidth]{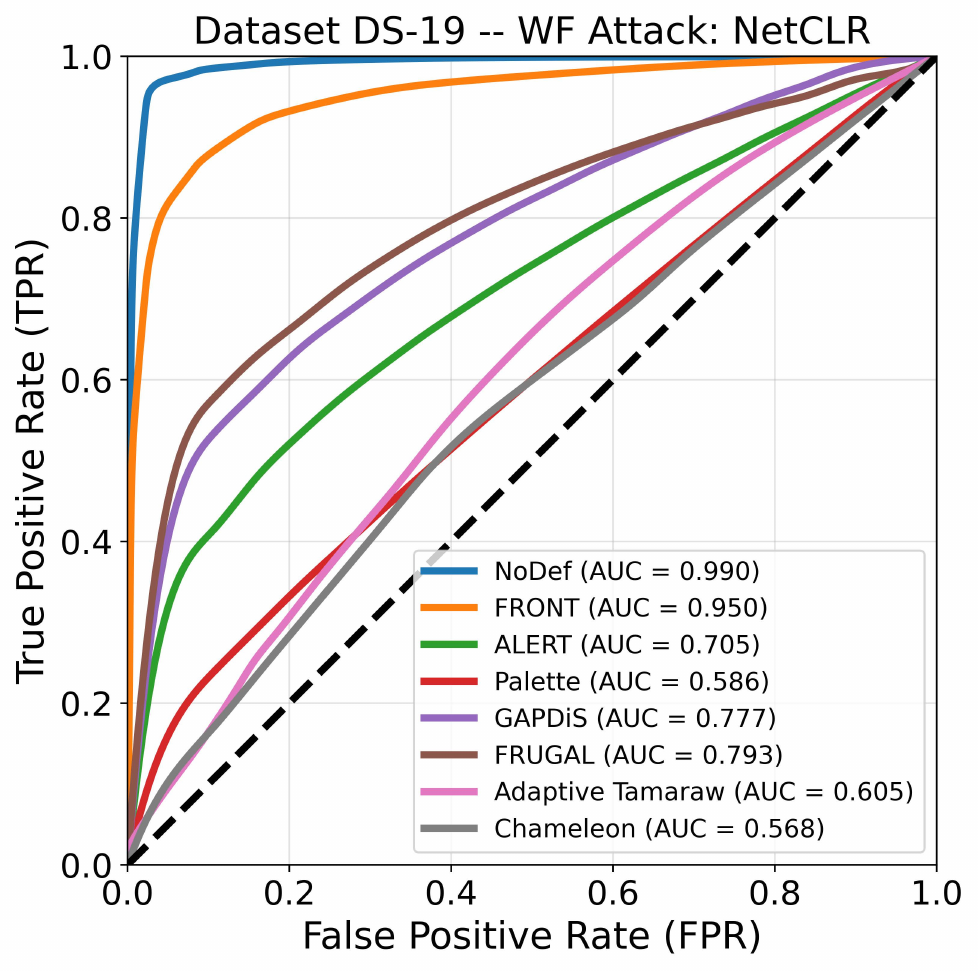}

    \caption{ROC curves and AUC scores of defense methods on the DS-19 dataset under the open-world scenario.}
    \label{fig:openworldds19}
\end{figure*}

Following the evaluation setting in Palette~\cite{shen2024real} and GAPDiS~\cite{xie2025gapdis}, we report performance using True Positive Rate (TPR), False Positive Rate (FPR), and F1-score as shown in \tableautorefname~\ref{tab:openworld}. 
We compare defenses under four SOTA attacks on the DF dataset. Without defenses, DF, Var-CNN, RF, and NetCLR achieve high TPR and F1-scores, demonstrating the strong effectiveness of these four website fingerprinting attacks even in the open-world setting. 
Defenses such as FRONT remain vulnerable to adversarial training-based attacks in the open-world setting, with TPRs exceeding 90\% in most cases. Similarly, GAPDiS exhibits inconsistent behavior, achieving low TPR against RF at 9.58\% but remaining ineffective against other attacks. 
ALERT reduces TPR to a moderate range, 68.86\% to 81.58\%, but still falls short of providing strong protection. Palette significantly lowers TPR in some cases, such as 24.02\% under NetCLR. However, Palette achieves the performance at the cost of extremely high FPR, up to 43.30\%, which severely limits its practicality. 

In contrast, \C maintains relatively moderate FPR values, 5.65\% to 11.77\%, while achieving low TPR, resulting in a more favorable balance as reflected by its F1-scores. FRUGAL achieves a particularly low TPR of 9.04\% against RF. This result is partly attributable to the available FRUGAL defended traces containing only packet-direction information without timing information, whereas RF relies substantially on timing-related features. However, DF, Var-CNN, and NetCLR still achieve TPRs above 90\% against FRUGAL, indicating that its protection remains limited against adversarially trained attacks. Adaptive Tamaraw achieves low TPRs of 11.06\% under DF and 13.28\% under NetCLR. By contrast, although Adaptive Tamaraw attains slightly lower FPR in some cases, 1.43\% under NetCLR, \C achieves more stable performance across different attacks.
Overall, our design, \C, provides competitive performance while maintaining a more balanced error profile.

In addition, we show the ROC curves of four SOTA WF attacks under different defense mechanisms in \figureautorefname~\ref{fig:openworldds19}. Similar to the observations on the DF dataset, the WF attack models, such as DF, Var-CNN, and NetCLR, achieve high AUC values in the absence of defenses, indicating their strong discriminative capability in the open-world setting. Although the RF attack yields a comparatively lower AUC of 88.3\%, it still demonstrates effectiveness and remains a competitive WF attack.
From \figureautorefname~\ref{fig:openworldds19}, we can see Palette and Adaptive Tamaraw exhibit relatively strong defense performance. 
However, both methods still leave room for improvement. Our designed method, \C, further reduces the attack effectiveness, achieving an AUC as low as 53.5\% against the DF attack. This result highlights the advantage of \C in significantly degrading the adversary's WF attack capability.

\subsection{Evaluation of Defense against DAAE-based Attacks}
\label{subsec:evadaae}

\begin{table*}[t]
\centering
\small
\setlength{\tabcolsep}{5pt}
\renewcommand{\arraystretch}{1.25}

\caption{A comprehensive comparison of AT-resistant defenses against DAAE-based attacks on the GTT23 dataset.}
\label{tab:daaeeva}
\begin{tabular}{c|ccc|ccc|ccc|ccc}
\toprule

\multirow{2}{*}{\diagbox[width=2.2cm]{\textbf{Defenses}}{\textbf{Attacks}}} & \multicolumn{3}{c|}{\textbf{DF}} & \multicolumn{3}{c|}{\textbf{Var-CNN}} & \multicolumn{3}{c|}{\textbf{RF}} & \multicolumn{3}{c}{\textbf{NetCLR}}  \\
\cmidrule(r){2-4} \cmidrule(r){5-7} \cmidrule(r){8-10} \cmidrule(r){11-13}
& P (\%) & R (\%) & F1 (\%)  & P (\%) & R (\%) & F1 (\%)  & P (\%) & R (\%) & F1 (\%)  & P (\%) & R (\%) & F1 (\%)  \\

\midrule
\rowcolor{gray!25}
NoDef & 93.82
& 93.20
& 93.13
& 93.21
& 91.80
& 91.89
& 94.86
& 92.20
& 94.01
& 93.76
& 93.00
& 93.03 \\


$Palette_{def}$ & 10.57
& 8.30
& 7.76
& 27.30
& 21.00
& 19.80
& 31.60
& 24.50
& 23.59
& 10.22
& 8.90
& 7.92 \\

\rowcolor{gray!25}
$Palette_{rec}$ & 86.90
& 83.00
& 83.67
& 86.38
& 82.30
& 82.72
& 80.64
& 67.90
& 70.30
& 81.99
& 80.50
& 80.55\\

$\makecell{Adaptive \\ Tamaraw_{def}}$& 17.47
& 13.79
& 13.58
& 14.05
& 13.37
& 13.19
& 34.97
& 30.33
& 28.35
& 14.76
& 13.41
&  13.16 \\

\rowcolor{gray!25}
$\makecell{Adaptive \\ Tamaraw_{rec}}$ & 87.38
& 86.78
& 86.44
& 88.74
& 87.98
& 87.85
& 89.82
& 88.65
& 88.22
& 85.79
& 84.96
& 84.11 \\

$\C_{def}$ & 8.92
& 7.67
& 6.42
& 12.26
& 12.44
& 9.37
& 20.33
& 19.78
& 16.23
& 7.78
& 6.34
& 5.97 \\

\rowcolor{gray!25}
\textbf{$\C_{rec}$} & 26.13
& 26.41
& 23.86
& 27.92
& 27.91
& 22.45
& 40.68
& 34.97
& 35.19
& 25.13
& 25.28
& 24.02 \\

\bottomrule
\end{tabular}
\vspace{-10pt}
\end{table*}











DAAE-based attack is designed for scenarios in which adversarial training-based WF cannot attack a defense. To evaluate the resistance of \C to DAAE, we compare it with two SOTA defenses, Palette and Adaptive Tamaraw, which demonstrate strong resistance to adversarial training on the recent GTT23~\cite{jansen2026measurement} dataset. We first use DAAE to reconstruct defended traces and then perform WF attacks using DF, Var-CNN, RF, and NetCLR implemented in the WFLib project~\cite{wflib} to evaluate precision (P), recall (R), and F1-score (F1). Table~\ref{tab:daaeeva} presents the performance of DAAE-based attacks against defenses that are resistant to adversarial training on the GTT23 dataset. \C provides substantially stronger protection than the existing defenses, particularly when the attacker is trained on defended traffic. For the defense-trained setting $\C_{def}$, \C achieves F1-scores of 6.42\%, 9.37\%, 16.23\%, and 5.97\% against DF, Var-CNN, RF, and NetCLR, respectively. These results are consistently lower than those of $Palette_{def}$ and $Adaptive~Tamaraw_{def}$, demonstrating that \C is robust and efficient in defending against adversarial training-based attacks.

After performing the DAAE-based WF attacks, the results further highlight the robustness of \C. Under $\C_{rec}$, the F1-score of the DAAE-based DF attack only increases from 6.42\% to 23.86\%, while the same attack increases from 7.76\% to 83.67\% under the $Palette_{rec}$ defense.
In particular, $\C_{rec}$ limits the strongest attack, RF, to an F1-score of only 35.19\%, while the corresponding scores for $Palette_{rec}$ and $Adaptive~Tamaraw_{rec}$ reach 70.30\% and 88.22\%, respectively. These results indicate that \C remains robust and effective even when the attacker can reconstruct or retrain on defended traffic, confirming that its randomized morphing strategy provides stronger resistance to defense-aware attacks.

\begin{table*}[t]
\centering
\small
\setlength{\tabcolsep}{3pt}
\renewcommand{\arraystretch}{1.2}

\caption{The early-stage traffic evaluation through Holmes WF attack~\cite{deng2024robust} under the GTT23~\cite{jansen2026measurement} dataset.}
\label{tab:earlystage}

\begin{tabular}{c|ccc|ccc|ccc|ccc|ccc}
\toprule

\makecell{\multirow{2}{*}{\textbf{Defenses}}} & \multicolumn{3}{c|}{\textbf{20\% loaded}} & \multicolumn{3}{c|}{\textbf{30\% loaded}} & \multicolumn{3}{c|}{\textbf{40\% loaded}} & \multicolumn{3}{c|}{\textbf{50\% loaded}} &  \multicolumn{3}{c}{\textbf{60\% loaded}} \\
\cmidrule(r){2-4} \cmidrule(r){5-7} \cmidrule(r){8-10} \cmidrule(r){11-13} \cmidrule(r){14-16}
& P (\%) & R (\%) & F1 (\%)  & P (\%) & R (\%) & F1 (\%) & P (\%) & R (\%) & F1 (\%)  & P (\%) & R (\%) & F1 (\%) & P (\%) & R (\%) & F1 (\%) \\

\midrule

NoDef & 54.45
& 26.85
& 31.07
& 61.12
& 39.91
& 45.24
& 64.03
& 44.09
& 49.34
& 65.50
& 59.44
& 60.44
& 66.03
& 54.43
& 57.61 \\

\rowcolor{gray!25}
Palette & 13.87
& 7.94
& 7.83
& 15.29
& 9.82
& 9.70
& 17.85
& 11.42
& 11.34
& 18.53
& 13.56
& 13.53
& 19.25
& 16.68
& 16.43\\

\makecell{Adaptive \\ Tamaraw} & 15.38
& 8.21
& 7.72
& 16.57
& 11.75
& 11.11
& 17.42
& 15.73
& 14.19
& 19.33
& 17.19
& 16.76
& 21.04
& 18.39
& 17.49 \\

\rowcolor{gray!25}
\textbf{\C} & 18.95
& 7.64
& 8.06
& 17.75
& 9.48
& 10.02
& 18.26
& 10.75
& 11.25
& 18.83
& 12.72
& 13.01
& 18.88
& 15.06
& 14.67\\

\bottomrule
\end{tabular}
\end{table*}

\begin{table*}[t]
\centering
\small
\setlength{\tabcolsep}{5pt}
\renewcommand{\arraystretch}{1.2}

\caption{Evaluation of performance and overhead trade-offs for \C across full trace morphing, trade-off trace morphing, and trace mutation on the DF~\cite{sirinam2018deep} and DS-19~\cite{gong2020zero} datasets in the closed-world scenario.}
\label{tab:tradeoffcw}
\begin{tabular}{p{2.0cm}|cc|cccc|cc|cccc}
\toprule
\textbf{\cellcolor{gray!15} \makecell{Dataset}}                 &  \multicolumn{6}{c|}{\cellcolor{olive!15}\textbf{DF~\cite{sirinam2018deep}}}           & \multicolumn{6}{c}{\cellcolor{orange!20}\textbf{DS-19~\cite{gong2020zero}}}        \\
\makecell{\multirow{2}{*}{\textbf{Defenses}}} & \multicolumn{2}{c}{\textbf{ Overhead (\%)}} & \multicolumn{4}{c|}{\textbf{Attack Accuracy (\%)}} & \multicolumn{2}{c}{\textbf{ Overhead (\%)}} & \multicolumn{4}{c}{\textbf{Attack Accuracy (\%)}} \\
\cmidrule(r){2-3} \cmidrule(r){4-7} \cmidrule(r){8-9} \cmidrule(r){10-13}
& Bandwidth & Time & DF  & Var-CNN  & RF & NetCLR & Bandwidth & Time & DF  & Var-CNN  & RF & NetCLR \\

\midrule

\rowcolor{gray!10}
\makecell{\C \\ (Lightweight)} & \cellcolor{olive!10}130.60
& \cellcolor{olive!10}14.51
& \cellcolor{olive!10}19.03
& \cellcolor{olive!10}26.23
& \cellcolor{olive!10}23.57
& \cellcolor{olive!10}19.72
& \cellcolor{orange!10}72.48
& \cellcolor{orange!10}9.97
& \cellcolor{orange!10}7.24
& \cellcolor{orange!10}6.50
& \cellcolor{orange!10}6.77
& \cellcolor{orange!10}6.40 \\

\makecell{\C \\ (Medium)} & 172.02
& 17.26
& 16.68
& 18.31 
& 17.69
& 14.61
& 78.82
& 11.70
& 6.44
& 5.75
& 6.32
& 5.95 \\

\rowcolor{gray!10}
\makecell{\C \\ (Heavy)} & \cellcolor{olive!10}260.46
& \cellcolor{olive!10}48.50
& \cellcolor{olive!10}10.49
& \cellcolor{olive!10}9.84
& \cellcolor{olive!10}12.94
& \cellcolor{olive!10}9.73
& \cellcolor{orange!10}93.42
& \cellcolor{orange!10}14.50
& \cellcolor{orange!10}4.17
& \cellcolor{orange!10}2.89
& \cellcolor{orange!10}4.07
& \cellcolor{orange!10}3.18 \\

\bottomrule
\end{tabular}
\end{table*}

\subsection{Early-stage Measurement}
\label{subsec:earlystage}

In this section, we evaluate whether early-stage information used in \C can leak sufficient trace information for website identification. To evaluate this potential leakage, we employ Holmes~\cite{deng2024robust}, an early-stage WF attack that exploits the temporal and spatial characteristics of partially observed traffic under the GTT23 dataset. Following the evaluation methodology of Holmes, we evaluate the first 20\%, 30\%, 40\%, 50\%, and 60\% of the loaded packets and compare \C with Palette and Adaptive Tamaraw. For all defenses, we set the trace length to 5,000 packets. As shown in \tableautorefname~\ref{tab:earlystage}, NoDef exhibits increasing attack performance as more traffic is observed, with the F1-score rising from 31.07\% at 20\% to 57.61\% at 60\% loaded traffic. In contrast, all three defenses substantially suppress Holmes, demonstrating that traffic regularization can effectively reduce the information available during the early stages of a connection.

\C maintains consistently low attack performance across all observation levels, with F1-scores of 8.06\%, 10.02\%, 11.25\%, 13.01\%, and 14.67\% from 20\% to 60\% loaded traffic, respectively. Even at 60\% loaded traffic, Holmes achieves only 14.67\% F1 against \C, representing a 74.5\% reduction compared with NoDef. Moreover, \C achieves lower recall and F1-scores than both Palette and Adaptive Tamaraw at 50\% and 60\% loaded traffic, indicating that its prefix-based trace selection does not expose additional discriminative information as more of the trace becomes available. These results suggest that the partial traffic prefix used for synchronization does not compromise the protection of \C, and that its many-to-many morphing strategy remains effective against early-stage WF attacks.

\subsection{Ablation Study}
\label{subsec:tuning}

\subsubsection{Traffic Morphing Strategy}
We perform a trade-off evaluation between three strategies: full trace morphing (Heavy), morphing trace cut-off (Medium), and trace mutation (Lightweight), as discussed in \sectionautorefname~\ref{subsec:mutation}.
\tableautorefname~\ref{tab:tradeoffcw} presents a trade-off experiment study of \C under three configurations, lightweight, medium, and heavy, capturing different trade-offs between defense strength and system overhead. Under the DF dataset, the trace mutation version (Lightweight) of \C already achieves substantial protection, reducing attack accuracy to 19.03\%, 26.23\%, 23.57\%, and 19.72\% against DF, Var-CNN, RF, and NetCLR, respectively. Meanwhile, the lightweight version of \C achieves moderate overhead on 130.60\% bandwidth and 14.51\% time. The medium version of \C further improves robustness, lowering attack accuracy across all models, such as 16.68\% for DF and 14.61\% for NetCLR, with a moderate increase in overhead. The heavy version of \C provides the strongest defense, reducing attack accuracy to as low as 9.73\% to 12.94\%, but has a significantly higher overhead on 260.46\% bandwidth and 48.50\% latency.

These results demonstrate that \C offers a flexible design space: practitioners can select an appropriate configuration depending on their desired balance between security and overhead. In particular, the lightweight version of \C already provides strong protection with reasonable cost, while the heavy variant achieves near-minimal attack accuracy for high-security scenarios.

\begin{figure}[t]
    \centering
    \begin{minipage}{0.48\linewidth}
        \centering
        \includegraphics[width=\linewidth]{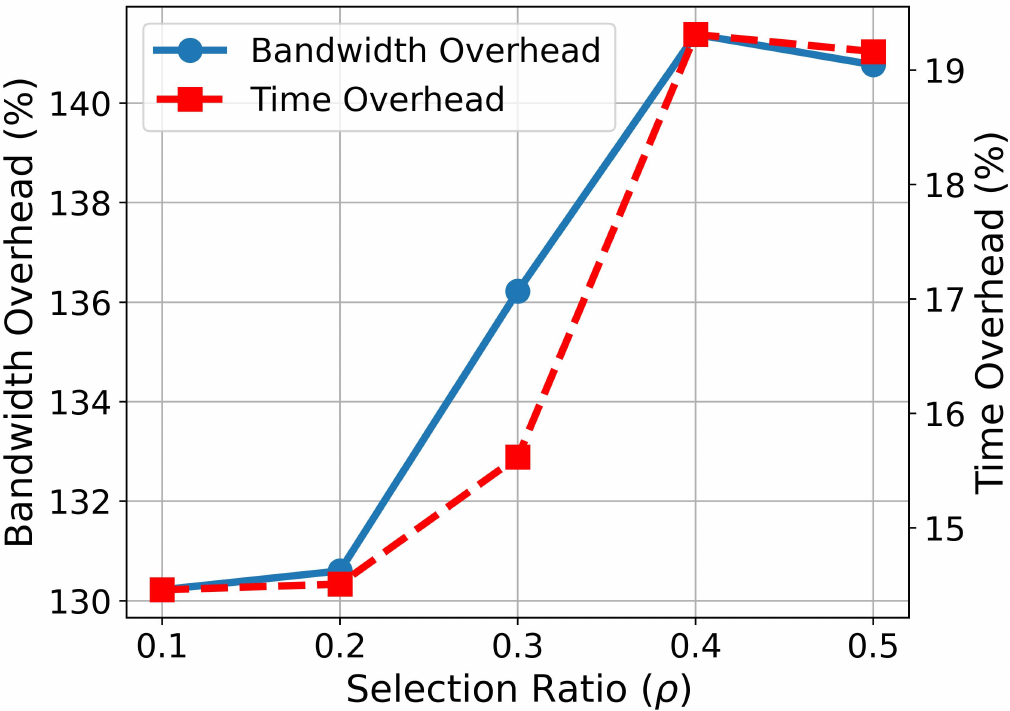}

        \small (a) Overhead on the DF.
    \end{minipage}
    \hfill
    \begin{minipage}{0.48\linewidth}
        \centering
        \includegraphics[width=\linewidth]{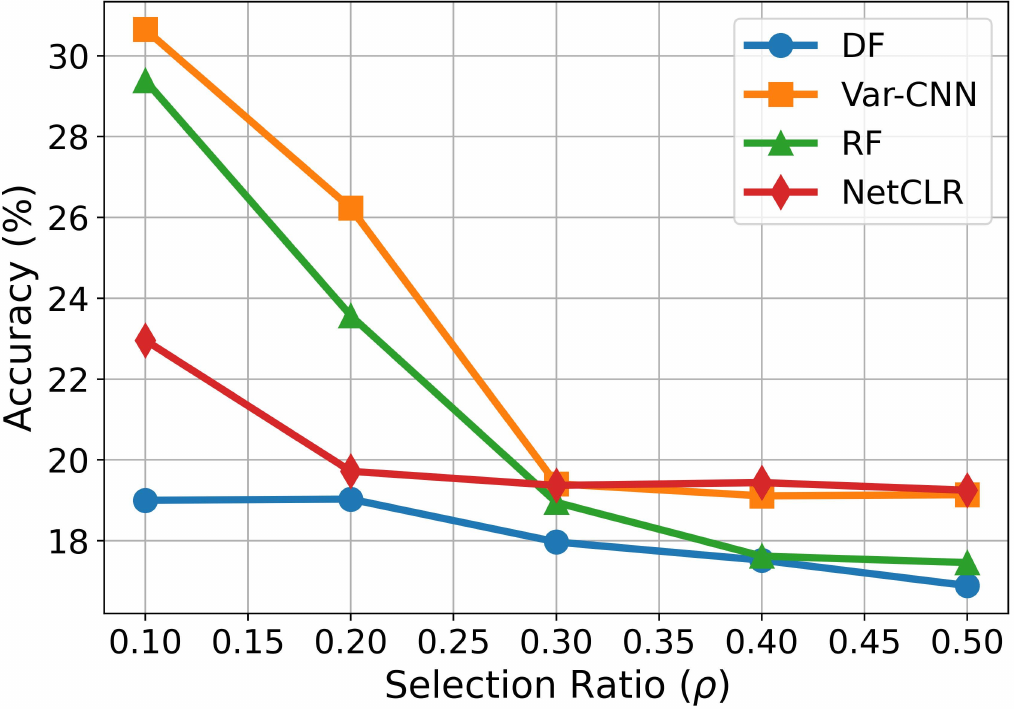}

        \small (b) Accuracy on the DF.
    \end{minipage}

    \caption{The overhead and performance of \C with different values of selection ratio $\rho$ on the DF dataset.}
    \label{fig:ablationdf}
\end{figure}

        
        


\subsubsection{Key Hyperparameter Evaluation}
\label{subsec:ablation}

We evaluate the key hyperparameter $\rho$, which controls the top $m$ website traces for the candidate pool.
The selection ratio $\rho$ governs the granularity of intra-class diversity and inter-class separability by controlling the number of website traces in the candidate pool.
We investigate the impact of the key parameter $\rho$ on \C under four SOTA WF attacks in the closed-world scenario. Intuitively, varying $\rho$ directly affects both the defense performance and the associated overhead.

\figureautorefname~\ref{fig:ablationdf} illustrates the overhead and performance of \C across different values of $\rho$ on the DF dataset. As $\rho$ increases, both bandwidth and time overheads grow, while the attack accuracy of SOTA WF models decreases. This trend arises because a larger $\rho$ enhances intra-class diversity and reduces inter-class separability, thereby making classification more difficult for the attacker. We observe consistent behavior on the DS-19 dataset: increasing $\rho$ leads to higher overhead but lower attack accuracy. To balance efficiency and robustness, we select $\rho=0.2$ in our experiments.

\begin{table}[t]
\centering
\small
\setlength{\tabcolsep}{6pt}
\renewcommand{\arraystretch}{1}

\caption{Robustness evaluation of \C against randomization-aware training on the GTT23 dataset.}
\label{tab:random}
\begin{tabular}{ccccccc}
\toprule

\multirow{2}{*}{\textbf{Number}} & \multicolumn{4}{c}{\textbf{Attack Accuracy (\%)}} \\
\cmidrule(r){2-5}
&  DF  &  Var-CNN  &   RF &  NetCLR  \\

\midrule
\rowcolor{gray!25}
\textbf{$K=1$}
& 7.10
& 6.72
& 6.49
& 5.84 \\

\textbf{$K=5$}
& 13.57
& 15.04
& 20.33
& 14.05 \\

\rowcolor{gray!25}
\textbf{$K=10$}
& 15.93
& 17.63
& 26.42
& 15.99 \\

\textbf{$K=15$}
& 17.02
& 19.12
& 29.29
& 16.91 \\

\rowcolor{gray!25}
\textbf{$K=20$}
& 17.50
& 18.82
& 31.06
&  17.04 \\

\bottomrule
\end{tabular}
\vspace{-10pt}
\end{table}

\subsubsection{Robustness to Randomization-Aware Training}
\label{subsec:robustness}
Since \C relies on randomized traffic transformations, evaluating it using only one defended trace per training instance may overestimate its effectiveness~\cite{mathews2023sok}. To provide a stronger evaluation, we consider an adaptive attacker that augments its training set with multiple independently randomized defended traces generated from each original training trace. Specifically, for each training trace, we generate 20 defended variants using independent random seeds, where $K \in \{1,5,10,15,20\}$. We then train the attacker on the resulting augmented dataset while keeping the test set fixed and independently randomized. Table~\ref{tab:random} evaluates \C's robustness against randomization-aware training on the GTT23 dataset. We augment each original training trace with multiple independently randomized defended traces, increasing the number of replicates from $K=1$ to $K=20$. As expected, exposing the attacker to more randomized realizations improves attack accuracy across all four WF models, indicating that the attacker can gradually learn features that are invariant to \C's randomization. Nevertheless, the attack accuracy remains relatively low even with 20 replicates. Specifically, DF, Var-CNN, RF, and NetCLR achieve only 17.50\%, 18.82\%, 31.06\%, and 17.04\% accuracy at $K=20$, respectively, compared with 7.10\%, 6.72\%, 6.49\%, and 5.84\% at $K=1$. Although RF benefits the most from increased training diversity, its accuracy remains substantially below the closed-world baseline of 100 classes, while the other three attacks remain below 20\%. These results demonstrate that \C's robustness and effectiveness are not solely attributable to the limited number of randomized realizations observed during training and that its protection remains effective against attackers explicitly trained on multiple independently randomized traces.

\subsubsection{Evaluation of Injectivity Boundaries and Candidate Trace}
\label{subsec:bound}

In this work, we introduce a defense model based on a many-to-many mapping mechanism and establish a corresponding security bound for \C in \appendixautorefname~\ref{subsec:boundproof}. Evaluating this upper bound is essential for understanding how the size of input website traces impacts adversarial robustness under a fixed-size candidate pool. 

We empirically study the \emph{injectivity boundary} of \C using a fixed candidate pool derived from the DS-19 dataset with a selection ratio $\rho = 0.2$ (833 website traces). As input, we use the DF dataset (95$\times$1000 website traces) to evaluate this boundary. The ratio between the input website traces and the website candidates is 95,000:833. We define $I$ as the number of input traces per class from the DF dataset, which models varying levels of trace availability for morphing. Increasing $I$ effectively enlarges the feasible injection space of the candidate pool, thereby enabling a finer-grained analysis of the injectivity boundary.

\begin{figure}[t]
  \centering
  \includegraphics[scale=0.35]{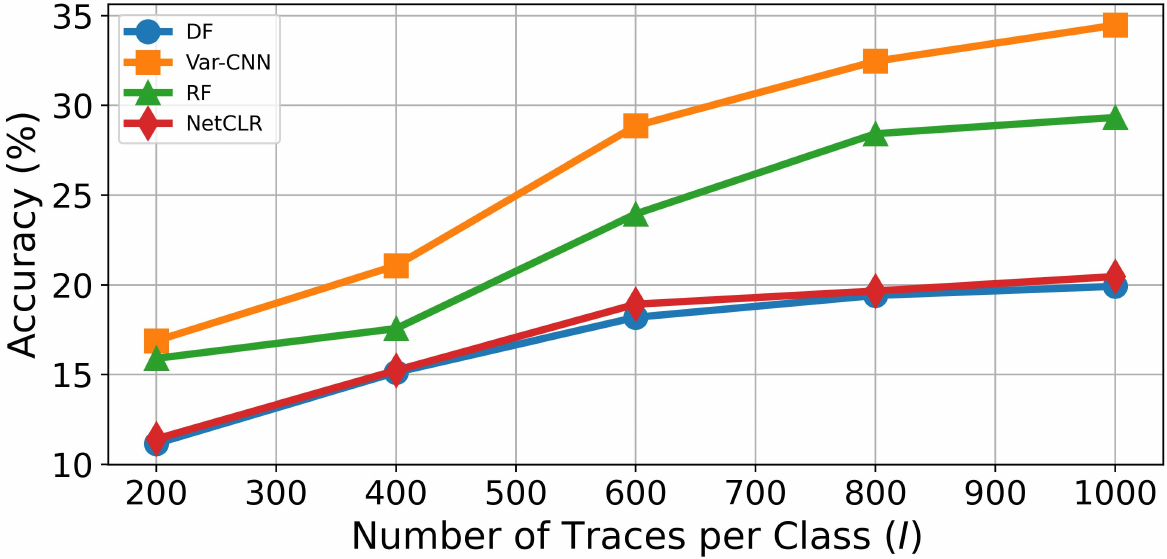}
  \caption{Visualization of Injectivity Boundaries Evaluation.}
  \label{fig:boundary}
  \vspace{-13pt}
\end{figure}

As shown in \figureautorefname~\ref{fig:boundary}, the sensitivity to the injectivity boundary varies across attack models, with Var-CNN exhibiting the most pronounced increase in accuracy, followed by RF and NetCLR, while DF remains comparatively more stable. Even under the input traces $I=1000$, the worst-case WF attack accuracy remains below 35\% across all models, indicating that the defense remains within a reasonable range for defending against adversarial training-based attacks. Furthermore, these results demonstrate that the effectiveness of our defense is not constrained by the candidate-pool dataset. Instead, the candidate pool can be constructed from arbitrary real-world website traces, enabling the defense to generalize beyond a predefined set of websites.

\subsection{Real-world Performance}
\label{subsec:realworld}

To estimate the overhead and performance of \C in the real-world, we implement \C within WFDefProxy~\cite{gong2023wfdefproxy} as a pluggable transport based on the Tor Pluggable Transport (PT) bridge~\cite{torptspec2022}. Our deployment uses two workstations: one hosts the bridge (PT server), while the other runs five Docker-based clients that visit websites in parallel. Each client connects to the PT server under a bandwidth cap of 120 Mbps. The bridge is provisioned with 2 CPU cores and 2 GB RAM, while the client machine has 8 CPU cores and 16 GB RAM; both run Ubuntu 24.04.4 LTS. We automate browsing using TBSelenium~\cite{tor-browser-selenium} with Tor Browser version 15.0. To ensure full page loads, each visit is limited to 80 seconds, followed by an additional 5-second dwell time to avoid overlap with subsequent traffic. After each visit, the browser is restarted, and a \textit{Signal.NEWNYM} command is issued to enforce a fresh circuit.

\begin{table}[t]
\centering
\small
\setlength{\tabcolsep}{3pt}
\renewcommand{\arraystretch}{1}

\caption{Evaluation of real-world performance.}
\label{tab:realworld}
\scalebox{0.92}{
\begin{tabular}{c|cc|cccc}
\toprule

\multirow{2}{*}{Defenses} & \multicolumn{2}{c|}{\textbf{ Overhead (\%)}} & \multicolumn{4}{c}{\textbf{Attack Accuracy (\%)}} \\
\cmidrule(r){2-3} \cmidrule(r){4-7}
& Bandwidth & Time &  DF  &  Var-CNN  &   RF &  NetCLR  \\

\midrule
\rowcolor{gray!25}
NoDef & 0.0
& 0.0
& 90.35
& 88.20
& 91.52
& 89.58 \\

FRONT & 97.04
& 0.0
& 56.16
& 61.50
& 78.64
& 58.39 \\

\rowcolor{gray!25}
RegulaTor & 75.08
& 125.53
& 18.74
& 22.72
& 31.06
& 13.53 \\

Palette & 84.40
& 28.91
& 10.30
& 12.95
& 42.25
& 9.31 \\

\rowcolor{gray!25}
\textbf{\C} & 91.47
& \textbf{16.25}
& \textbf{7.17}
& \textbf{9.33}
& \textbf{17.45}
& \textbf{6.51}  \\

\bottomrule
\end{tabular}
}
\vspace{-10pt}
\end{table}

Real-world evaluation poses greater challenges for WF defenses due to network variability and practical deployment constraints. Following prior work~\cite{shen2024real, gong2023wfdefproxy}, we collect a real-world defended dataset by capturing traffic between the PT client and PT server. Similar to prior studies~\cite{shen2024real}, we select websites from the Tranco~\cite{LePochat2019} list snapshot dated July 1, 2025. From the top 1,000 entries, we remove inaccessible and duplicate URLs that resolve to identical pages, and retain the first 100 sites as the monitored set. Multiple clients visit these sites in randomized order. We aggregate all traces and filter outliers following a prior approach~\cite{panchenko2016website}, ensuring that each monitored website is represented by at least 100 instances.

\tableautorefname~\ref{tab:realworld} shows a comprehensive comparison of the real-world performance and overhead between defense mechanisms such as FRONT, RegulaTor, Palette, and \C. We observe that \C achieves a suitable trade-off between effectiveness and efficiency, reducing attack accuracy to 7.17\% for DF, 9.33\% for Var-CNN, 17.45\% for RF, and 6.51\% for NetCLR, while incurring a moderate bandwidth overhead of 91.47\% and a relatively low time overhead of 16.25\%. In contrast, FRONT provides partial robustness in the real-world, substantially reducing the accuracy of attacks such as DF, Var-CNN, and NetCLR but remaining ineffective against RF, which still achieves 78.64\% accuracy. RegulaTor exhibits a clear discrepancy between effectiveness and efficiency, introducing a large time overhead of 125.53\% while still allowing non-trivial attack success. In particular, the RF attack still reaches 31.06\% accuracy under RegulaTor defense. Similarly, Palette achieves strong reductions for DF, Var-CNN, and NetCLR but remains vulnerable to RF with 42.25\% accuracy and incurs higher time overhead than \C, reflecting a less favorable efficiency–effectiveness balance. These results demonstrate that \C achieves the lowest attack accuracy across all four evaluated attacks while maintaining competitive overhead.


\section{Related Work}
WF defenses aim to obfuscate or alter traffic patterns by inserting dummy packets or increasing packet timing delays that reduce WF performance. However, the existing WF defense methods inevitably introduce new artifacts, which can be exploited by WF attack models. For example, FRONT~\cite{gong2020zero} obfuscates the burst patterns by adding random dummy packets. However, FRONT continues to leak trace information, which remains vulnerable to DL-based WF attacks. TrafficSliver~\cite{de2020trafficsliver} mitigates the observable traffic patterns by splitting traffic across multiple entry relays. However, the DL–based WF attacks can still exploit timing, burst structure, and directional patterns left by TrafficSliver. 
As a result, these defenses exhibit limited robustness against adaptive adversaries, particularly those employing adversarial retraining.

Another main defense design is regularization-based, which reshapes website traffic traces into more uniform patterns, thereby providing rigorous bounds on an adversary's ability to classify webpages. Walkie-Talkie~\cite{wang2017walkie} maps the original website trace to another trace, effectively creating collisions in which at least two traces share identical patterns.
However, Walkie-Talkie does not obfuscate fine-grained timing information, making this defense mechanism vulnerable to advanced DL–based website fingerprinting attacks~\cite{bhat2018var, rahman2019tik, bahramali2023realistic}. 
Additionally, Walkie-Talkie requires the browser to operate in half-duplex mode, which means the client must wait for all responses before issuing new requests. 
Similarly, Palette~\cite{shen2024real} incurs significant buffering overhead to support large-scale clustering and traffic shaping. Adaptive Tamaraw~\cite{khajavi2026lightening} creates an adaptive WF defense framework by integrating Tamaraw~\cite{cai2014systematic} defense with dynamic clustering to adjust defense parameters in real time. However, Adaptive Tamaraw can also incur substantial bandwidth and latency overhead as it dynamically adapts its defense parameters~\cite{khajavi2026lightening}.

\section{Conclusion}

In this work, we show that existing WF defenses can remain vulnerable to defense-aware attacks, even when they are robust against adversarial training.
To address this problem, we present \C, a deployable WF defense based on many-to-many randomized traffic morphing. \C combines diverse shared morphing targets, radix-trie-based synchronization, and trace mutation to disrupt learnable defense patterns while controlling overhead. Across closed-world, open-world, DAAE-based, randomization-aware, early-stage, and real-world evaluations, Chameleon provides stronger protection than existing defenses. These results demonstrate that eliminating defense-induced artifacts is a promising direction for building robust website fingerprinting defense.

\section*{Acknowledgement}
\label{sec:ack}

This work was conducted while Dr. Yao Liu was at the University of South Florida (USF).

\newpage

\section*{Ethical Considerations}
During the study, we followed the guidelines of the Tor Research Safety Board~\cite{torresearchsafetyboard}. All traffic was collected from browsing sessions initiated by our own clients, without involving real Tor user traffic or disrupting the Tor network. While \C is designed to protect web-browsing privacy, we acknowledge that its traffic obfuscation techniques could potentially be misused to evade legitimate network monitoring, such as intrusion detection and abuse mitigation.

\bibliographystyle{IEEEtran}
\bibliography{ref}

\appendix

\section{Hyperparameters \& Validation Experiments}
\label{sec:appendhyper}

\subsection{Hyperparameter Setting}
\label{sec:appendparameter}

To evaluate each WF defense model under an optimal trade-off between performance and overhead, we employ a carefully selected set of hyperparameters based on the configurations reported in the original studies. \tablename~\ref{tab:hyperparameters} summarizes the hyperparameter settings used for each model in this work. 

In FRONT~\cite{gong2020zero}, $N_{c}$ and $N_{s}$ are key parameters that determine the data overhead, representing the padding budgets at the client and proxy, respectively. The parameters $W_{min}$ and $W_{max}$ define the padding window (Minimum and maximum padding time), which controls the temporal range within which dummy packets are most likely to be injected into the original traffic trace. In Palette~\cite{shen2024real}, $u\_upload$ and $u\_download$ are introduced to tune the upper bounds, enabling a principled trade-off between time overhead and anonymity for upload and download traffic, respectively. The thresholds $alpha\_upload$ and $alpha\_download$ govern the time-slot sampling process for upload and download directions, determining when traffic should be considered for transmission. The parameter $k$ specifies the anonymity set size, while $b$ denotes the tail padding length. Specifically, $b$ is used to monitor whether the transmission buffer remains empty over consecutive time slots, thereby determining whether dummy packet padding should be continued to preserve anonymity guarantees. 

In ALERT~\cite{qiao2024trace}, $min$ and $max$ control the min and max overhand thresholds. In GAPDiS~\cite{xie2025gapdis}, $max\_perturb$ denotes the maximum perturbation length. $max\_perturb=128$ means a maximum of 128 dummy packets are allowed to be inserted. In addition, $topk\_num$ denotes the number of top-$k$ candidate solutions retained during the search process, constrained within the range $[1, 5000 - M]$. The mutation rate, $muta\_rate$, controls the degree of stochastic variation introduced into candidate solutions. $tabu\_len\_multi$ determines the number of layers used to scale the tabu list length, thereby influencing the memory structure and the search procedure's diversification capability. In FRUGAL~\cite{wang2026cease}, $bwo\_para$ is the bandwidth overhead parameter, $limits$ is the number of training samples per class.

In Adaptive Tamaraw~\cite{khajavi2026lightening}, as in the Palette defense work, $k$ is the anonymity set size. $L$ is the Tamaraw bucket length parameter, Larger $L$ provides better security but higher overhead. $\alpha$ is the ECDIRE confidence threshold that controls when to switch from global to local parameters. In our design, the parameter website trace selection ratio $\rho$ defines the high intra-class diversity and low inter-class separability website trace selection ratio. The lower selection ratio will decrease the defense efficiency and lower the memory overhead for building the radix trie. Parameter $mutation$ controls the website morphing strategy. When $mutation = 1$, the model applies trace mutation; when $mutation = 0$, it performs full-trace morphing; and when $mutation = 3$, it uses trace cut-off morphing. For both training and evaluation, we set the number of epochs to $epochs = 50$. In the DAAE design, by default, 20\% of the dataset ($train\_ratio=0.2$) is used for model training, while the remaining 80\% is used for testing.

\begin{table}[t]
\renewcommand{\arraystretch}{1.2}
\caption{Parameter settings for evaluating WF defense models under an optimal trade-off between performance and overhead.}
\label{tab:hyperparameters}
\centering
\begin{tabular}{p{2.3cm}p{5.5cm}}
\toprule[0.75pt]
\rowcolor{olive!15}
\textbf{Defense models} & \textbf{Hyperparameters} \\
\midrule
FRONT            & $N_{c}=N_{s}=100$, $W_{min}=1s$, $W_{max}=8s$\\
\rowcolor{gray!25}
ALERT        & \makecell[l]{$min=0.1$, $max=0.5$} \\

Palette          & \makecell[l]{$u\_upload=u\_download=45$, $round=1$, \\ $alpha\_upload=alpha\_download=0.16$, \\ $k=30$, $b=20$} \\
\rowcolor{gray!25}
FRUGAL & $bwo\_para=0.8$, $limits=20$ \\

GAPDiS           & \makecell[l]{$max\_perturb=128$, $tabu\_len\_multi=5$, \\$topk\_num=10$, $muta\_rate=8$} \\
\rowcolor{gray!25}
Adaptive Tamaraw & \makecell[l]{$k=7$, $\alpha=0.9$, $L=1000$ } \\
\C (Ours)         & \makecell[l]{ $\rho=0.2$, $mutation=1$, $epochs=50$ } \\
\rowcolor{gray!25}
DAAE (Ours) & $train\_ratio=0.2$ \\

\bottomrule[0.75pt]
\end{tabular}
\vspace{-10pt}
\end{table}

\begin{figure*}[t]
    \centering
    \subfloat[Traffic trace patterns of website 1]{
    \includegraphics[width=0.31\linewidth]{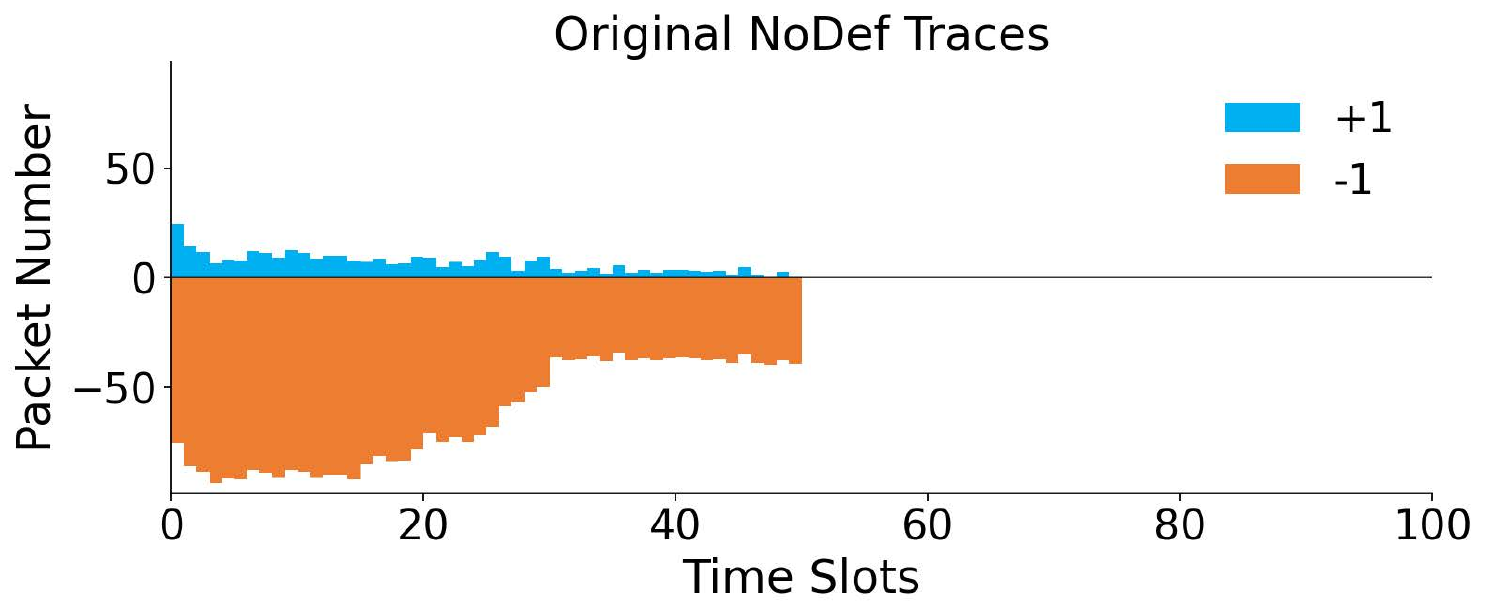}
    \includegraphics[width=0.31\linewidth]{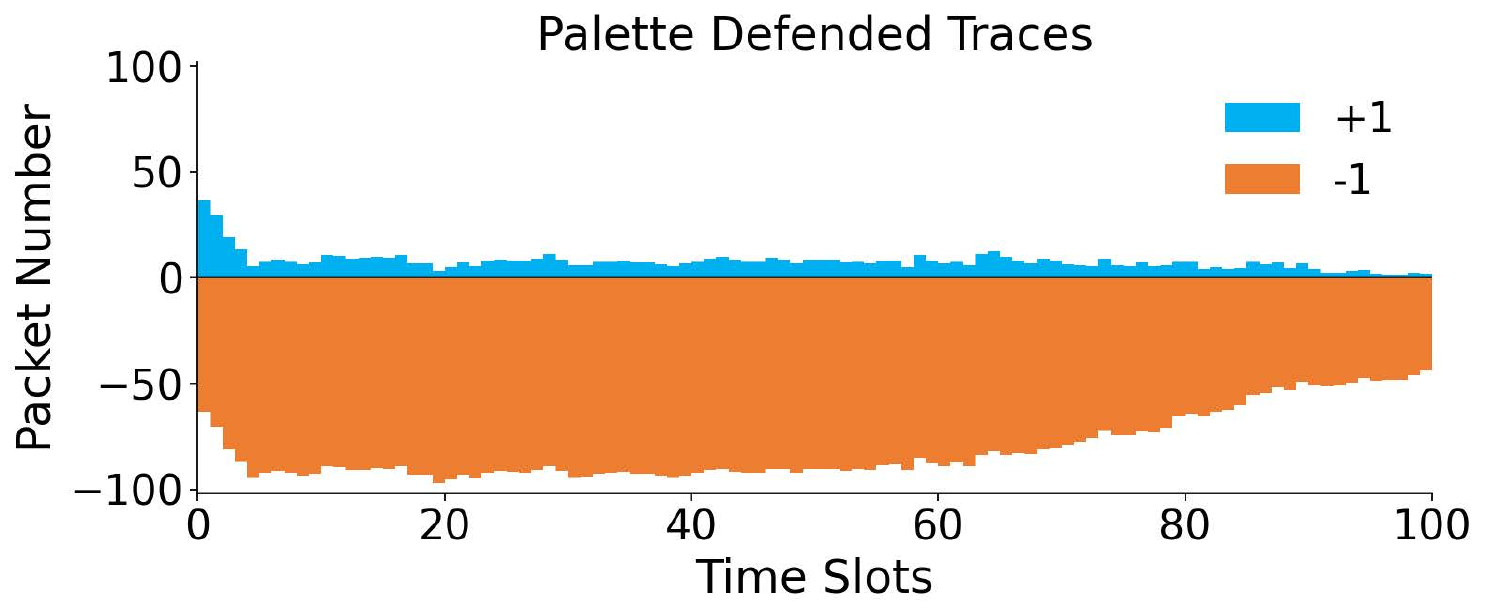}
    \includegraphics[width=0.31\linewidth]{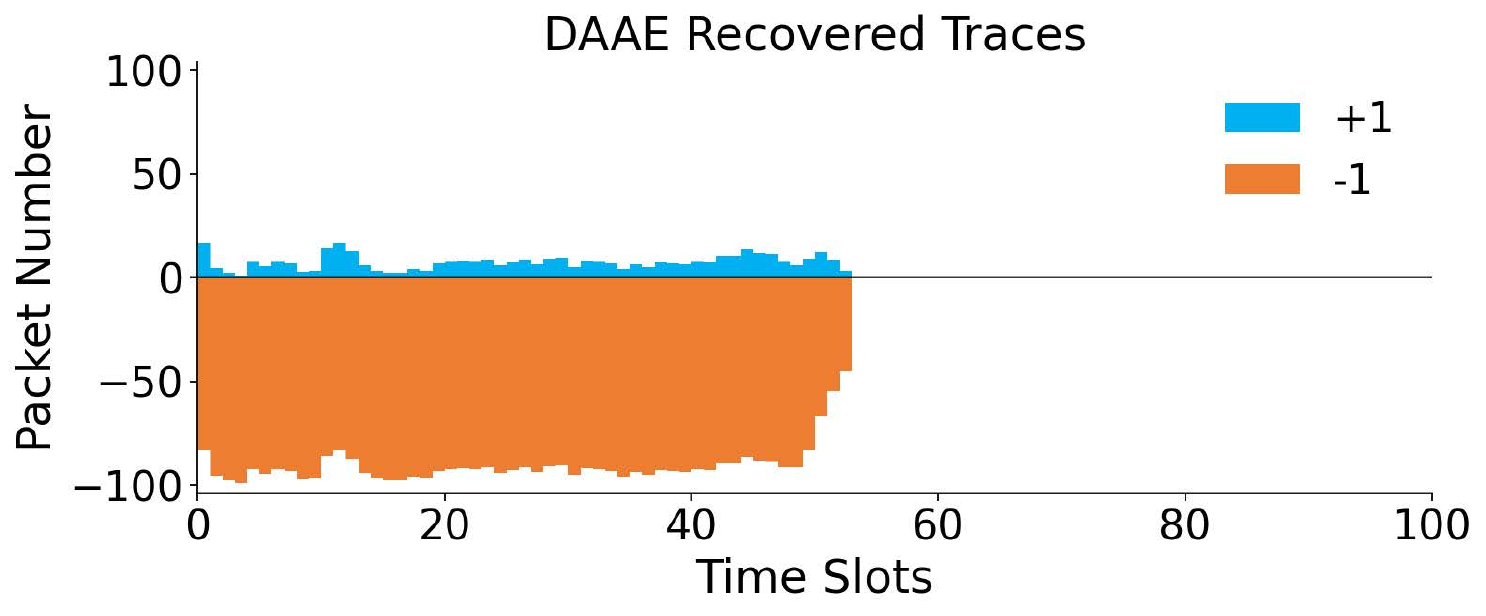}
    }

    \subfloat[Traffic trace patterns of website 2]{
    \includegraphics[width=0.31\linewidth]{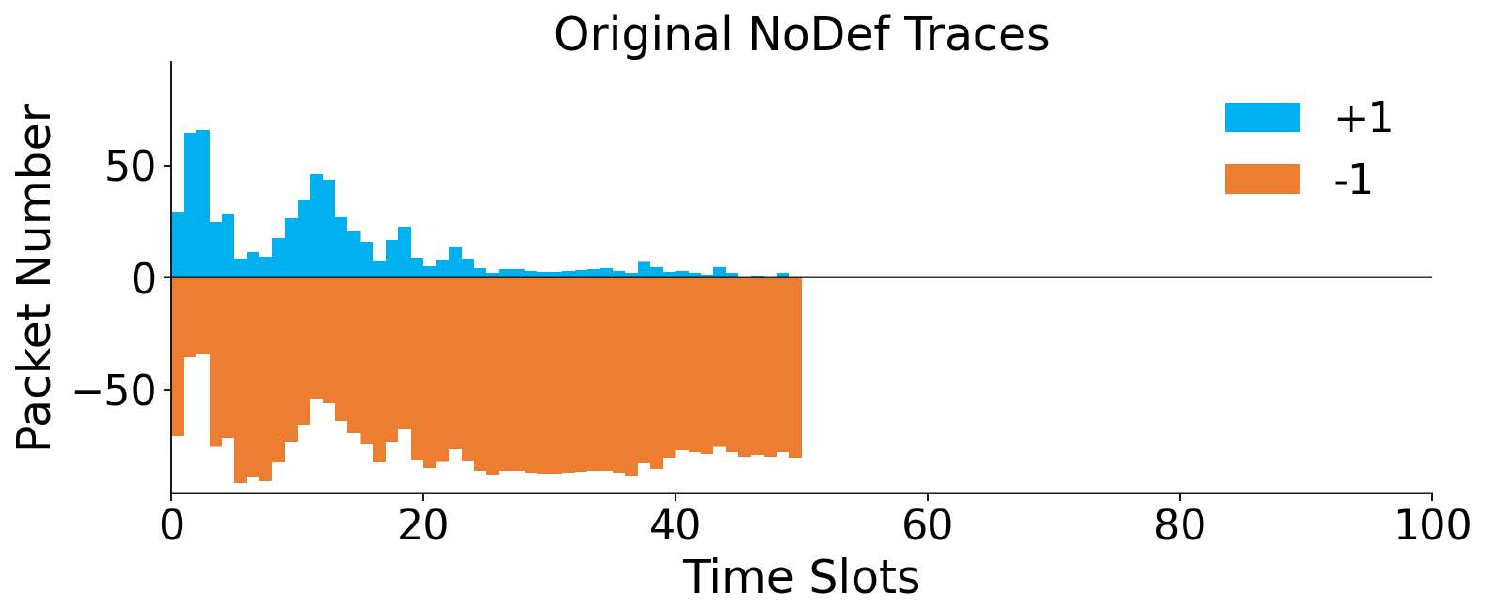}
    \includegraphics[width=0.31\linewidth]{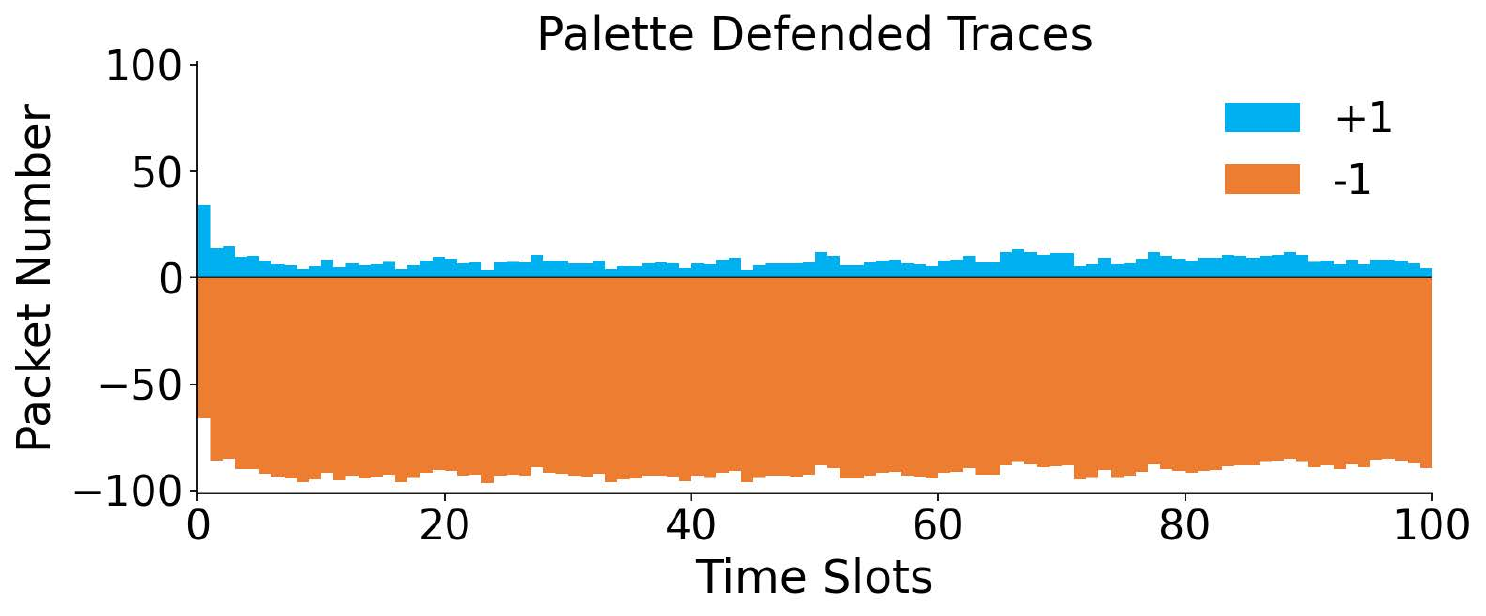}
    \includegraphics[width=0.31\linewidth]{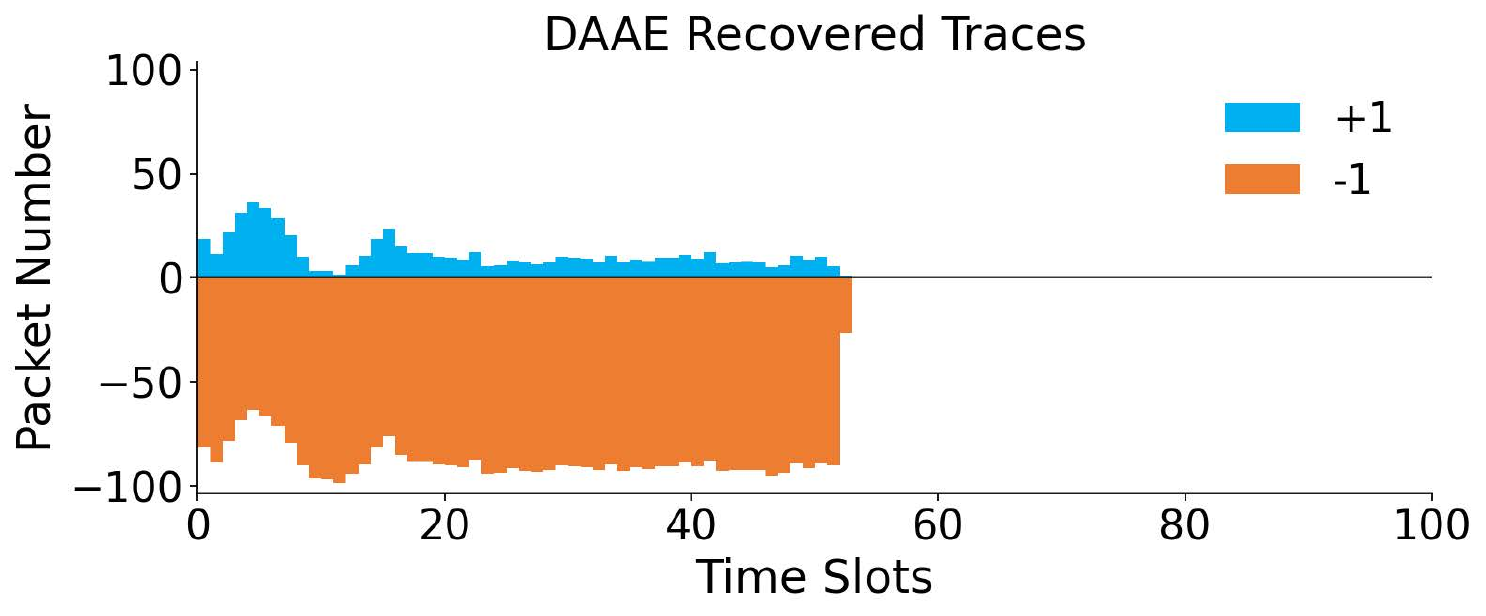}
    }

    \caption{Visualization of the TAMs for two websites under three settings: the original trace, the Palette~\cite{shen2024real}-defended trace, and the trace recovered by the DAAE-based attack under the GTT23 dataset~\cite{jansen2026measurement}. Each TAM covers the first 100 time slots, with each slot spanning 50ms. Outgoing and incoming packets are represented by blue (+1) and red ($-$1) bars, respectively, with bar height indicating the packet count within each slot. Each row presents the original, Palette-defended, and DAAE-recovered trace patterns from the same website.}
    \label{fig:motivation}
\end{figure*}

\begin{figure}[t]
  \centering
  \includegraphics[scale=0.40]{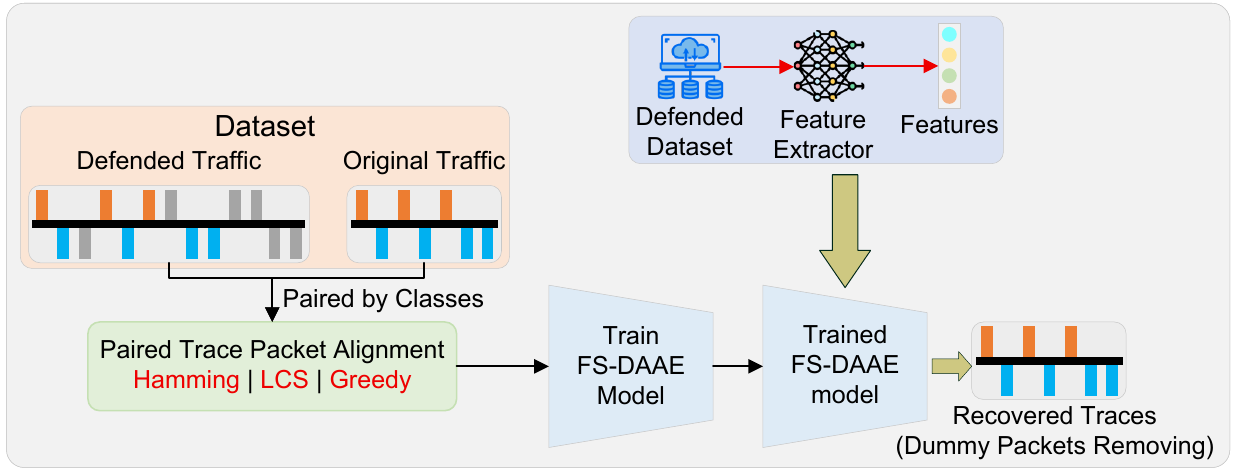}
  \caption{An overview of the DAAE architecture.}
  \label{fig:daae}
  \vspace{-0.2in}
\end{figure}

\subsection{Defense-Aware Autoencoder (DAAE) Attack}
\label{subsec:daae}

Through analyzing existing many-to-one defense mechanisms, we develop a new Defense-Aware Autoencoder (DAAE) attack that learns the characteristic packet-injection patterns introduced by WF defenses and removes the inferred dummy packets from unseen defended traces. 

\subsubsection{DAAE Formalization}
\label{subsec:daaeformula}
DAAE is a defense-aware autoencoder that learns the packet transformation
induced by a defense $\mathcal{D}$ from a small paired corpus of original and
defended traces. Let
\begin{equation}
  \mathcal{P}
  =
  \left\{
    \left(W_i,\, W_i^{'},\, y_i\right)
  \right\}_{i=1}^{\mathcal{R}},
  \qquad
  W_i^{'} = \mathcal{D}(W_i),
\end{equation}
where $y_i\in\mathcal{Y}$ denotes the website label and $i$ indexes the
$i$-th trace pair.

For each pair $(W_i, W_i^{'}, y_i)\in\mathcal{P}$, sequence alignment is
performed to derive a supervision mask without requiring white-box access to
the defense mechanism $\mathcal{D}$:
\begin{equation}
  \mathbf{m}
  =
  \mathrm{Align}\bigl(W_i, W_i^{'}\bigr).
\end{equation}
Here, $\mathrm{Align}$ can employ different sequence alignment strategies,
including Hamming distance, Longest Common Subsequence (LCS)~\cite{paterson1994longest}, and a greedy method.

A defense-agnostic encoder $\Phi$ maps each defended trace to a
fixed-width tensor with $c$ feature channels, corresponding to packet
direction, min-max normalized timestamps, and log inter-packet gaps:
\begin{equation}
  X_{i}^{'} = \Phi(W_{i}^{'}) \in \mathbb{R}^{c \times L},
  \label{eq:features}
\end{equation}
where $\mathbb{R}^{c \times L}$ denotes the space of real-valued matrices
with $c$ feature channels and $L$ trace positions.

The defense-aware autoencoder $f_{\theta}$ takes the defended feature tensor
$X^{'}$ together with same-class support pairs
$\mathcal{S}=(X^{'},\mathbf{m})$ and learns to \textit{infer the transformation
pattern of the defense}. Given the defense-aware training set
$\mathcal{P}_{\mathrm{tr}}$, the model is trained by minimizing the
discrepancy between its predictions and the alignment-derived supervision
masks:
\begin{equation}
  \theta^{*}
  =
  \arg\min_{\theta}
  \;
  \mathbb{E}_{(W,W^{'})\sim\mathcal{P}_{\mathrm{tr}}}
  \left[
    \mathcal{L}
    \bigl(
      \theta;\,
      W, W^{'}, \mathbf{m}, \mathcal{S}
    \bigr)
  \right].
\end{equation}

As shown in \figureautorefname~\ref{fig:daae}, given a query trace and a set of support traces, a 1D CNN extracts their representations, while a prototype-based Feature-wise Linear Modulation (FiLM)~\cite{perez2018film} layer and cross-attention enable the query to incorporate defense-specific injection patterns learned from the supports. The DAAE model employs two complementary prediction paths: a convolutional head for identifying local injection boundaries and a U-Net~\cite{ronneberger2015u} with dilated convolutions for capturing longer dummy-packet runs.

DAAE is trained using paired original and defended traces. For each pair, we first align the two traces to identify the positions at which dummy packets are inserted. We implement three alignment strategies: LCS, Hamming distance, and a greedy method, with LCS used by default. The resulting alignment provides dummy-packet positions under different traffic patterns, which are then used as supervision for training DAAE. 
To optimize the DAAE model, we apply focal~\cite{lin2017focal} and Dice~\cite{sudre2017generalised} losses. We observe that dummy-injection defenses typically add many padding packets. A plain per-packet $L_2$ loss can ignore the rarer class, while focal loss up-weights hard decisions. Meanwhile, Dice loss measures overlap between the predicted and labeled keep sets over the whole trace. We also apply a keep-rate regularizer so the predicted fraction of original packets matches the labeled fraction. 

We further develop a few-shot variant~\cite{wang2020generalizing}, FS-DAAE. To match a real WF setting, the attacker does not know the class of a captured trace. Only a limited paired split (20\% of instances per class by default) is labeled and used to train DAAE. The remaining instances are never paired with originals at test and validation phases. At inference, the attacker observes only a captured defended trace and has no access to its original trace, true class label, or victim-side state. Neither DAAE nor FS-DAAE uses the true class labels of validation or test traces during recovery. Instead, the trained model directly infers the defense's dummy-packet injection pattern and removes the predicted dummy packets from each defended trace. 


\subsubsection{DAAE Decoding Stategy}
At test time, DAAE does not have access to the original trace. Instead, it maintains a small support pool constructed from the training set and applies the same inference pipeline to the captured defended traffic.
We designed two strategies to decode the defended packets into the original ones.
We find that directly thresholding at 0.5 for the predicted keep probabilities can incorrectly remove original packets or retain dummy packets, disrupting the original packet-direction subsequence. Therefore, we perform a sequence-level decoding step against support prototypes, where each prototype represents the packet-direction sequence of the packets labeled as original in a support trace. We introduce a skip bias to control the trade-off between retaining and removing packets and select the bias that produces a keep rate closest to the median keep rate of the support set. When multiple supports are available, we first consider supports whose kept-trace lengths are closest to the target length and select the candidate yielding the best sequence-level decoding score. The recovered trace is then constructed by retaining the captured packets selected by the final mask. Finally, we train a standard WF classifier, such as DF and RF, on the recovered traces and evaluate its classification performance to quantify how effectively DAAE restores the fingerprint information removed by the defense.

As shown in the third column of \figureautorefname~\ref{fig:motivation}, the recovered traces do not exactly match the originals, but much of their original traffic structure is restored. This result suggests that defense-induced traffic artifacts can be exploited to partially reverse the effect of traffic regularization, potentially undermining the defense against a sufficiently adaptive adversary. \figureautorefname~\ref{fig:motivation} demonstrates that the original non-defended traces show distinct patterns in their Traffic Aggregation Matrices (TAMs)~\cite{shen2023subverting}. Palette~\cite{shen2024real} aims to reduce this variability by regularizing traffic into a common pattern. However, the resulting defended traces can still preserve noticeable differences. More importantly, the regularization process introduces systematic dummy-packet injection patterns that may themselves become learnable by an adaptive adversary. 

\subsection{Row-wise z-score Normalization}
\label{subsec:zscore}

\begin{algorithm}[t]
\caption{Row-wise $z$-score normalization of matrix $\mathbf{X}$}
\label{alg:rowzscore}
\SetAlgoLined
\KwIn{Matrix $\mathbf{X} \in \mathbb{R}^{N \times L}$, stability constant $\varepsilon > 0$}
\KwOut{Row-normalized matrix $\mathbf{Z} \in \mathbb{R}^{N \times L}$}

\For{$i \gets 1$ \KwTo $N$}{
    $\mu_i \gets \frac{1}{L}\sum_{j=1}^{L} X_{ij}$ \tcp{Compute row mean}
    
    $\sigma_i \gets \sqrt{\frac{1}{L}\sum_{j=1}^{L}(X_{ij}-\mu_i)^2}$ \tcp{Compute row standard deviation}
    
    \For{$j \gets 1$ \KwTo $L$}{
        $Z_{ij} \gets \frac{X_{ij}-\mu_i}{\sigma_i + \varepsilon}$ \tcp{Row-wise $z$-score}
    }
}
\Return $\mathbf{Z}$
\end{algorithm}

Row-wise $z$-score normalization applies a separate mapping standardization to each row of a matrix $\mathbf{X}\in\mathbb{R}^{N\times L}$.$N$ is the number of traces and $L$ the fixed prefix length used for comparison. This algorithm treats each row as one feature vector. As shown in \algorithmcfname~\ref{alg:rowzscore}, for each row $i$, the algorithm computes the row mean $\mu_{i}$ and a row scatter scale $\sigma_{i}$ from the entries $X_{ij},{j=1\to L}$, then sets $Z_{ij}=(X_{ij}-\mu_{i})/(\sigma_{i}+\varepsilon)$. A stability constant $\varepsilon>0$ is added to the denominator to avoid division by zero when a row is constant or near constant. In implementation, $\sigma_{i}$ follows either standard deviation (divide by $L$ under the square root) or sample standard deviation (divide by $L{-}1$). $\varepsilon$ and the precise $\sigma_{i}$ are the key parameters that materially affect the normalized rows.

This method will centralize each trace around zero and rescale the centralization spread. So that rows with different absolute magnitudes or local variance become more comparable before similarity measures are computed. The operation is performed independently on each row, ensuring that no statistical information is shared between traces. Consequently, the identity of each sample is preserved while eliminating variations that arise solely from scale differences along the sequence dimension.

\subsection{Defense Performance Against WF Attacks Without Adversarial Training}
\label{subsec:withoutadvent}

\begin{table*}[t]
\centering
\small
\setlength{\tabcolsep}{6pt}
\renewcommand{\arraystretch}{1.2}

\caption{Evaluation of defense effectiveness without AT-based attack across six SOTA WF defenses using DF~\cite{sirinam2018deep} and DS-19~\cite{gong2020zero} datasets under four SOTA WF attack models in the closed-world scenario.}
\label{tab:withoutadvtrain}

\scalebox{0.83}{
\begin{tabular}{lcccccccc}
\toprule
\rowcolor{olive!15}
\textbf{Dataset}                 & \multicolumn{4}{c}{DF~\cite{sirinam2018deep}}            & \multicolumn{4}{c}{DS-19~\cite{gong2020zero}}         \\
\multirow{2}{*}{\diagbox{\textbf{Defenses}}{\textbf{Attacks}}} & DF~\cite{sirinam2018deep}  & Var-CNN~\cite{bhat2018var}  & RF~\cite{shen2023subverting} & NetCLR~\cite{bahramali2023realistic} & DF~\cite{sirinam2018deep}  & Var-CNN~\cite{bhat2018var}  & RF~\cite{shen2023subverting} & NetCLR~\cite{bahramali2023realistic} \\
\cmidrule(r){2-5} \cmidrule(r){6-9}
& \multicolumn{4}{c}{Accuracy (\%)} & \multicolumn{4}{c}{Accuracy (\%)} \\
\midrule
NoDef & 98.33 & 98.83 & 98.93 & 98.53 & 97.23 & 96.75 & 90.94 & 97.50   \\

\rowcolor{gray!25}
FRONT~\cite{gong2020zero} & 27.54
& 28.71
& 35.18
& 25.79
& 37.23
& 46.75
& 40.94
& 47.53 \\


ALERT~\cite{qiao2024trace} & 17.48
& 24.07
& 29.85
& 19.82
& 28.80
& 36.50
& 37.80
& 29.61 \\

\rowcolor{gray!25}
Palette~\cite{shen2024real} & 2.15
& 2.49
& 3.01
& 1.92
& 1.10
& 1.75
& 2.23
& 1.01 \\

GAPDiS~\cite{xie2025gapdis} & 7.19
& 4.60
& 11.79
& 10.74
& 1.65
& 2.89
& 1.50
& 2.88  \\

\rowcolor{gray!25}
FRUGAL~\cite{wang2026cease} & 5.29
& 5.75
& 11.17
& 6.73
& 5.10
& 5.63
& 11.26
& 6.29 \\

Adaptive Tamaraw~\cite{khajavi2026lightening} & 5.05
& 5.15
& 6.28
& 5.23
& 1.23
& 1.77
& 1.82
& 1.00 \\

\rowcolor{gray!25}
\textbf{\C (Lightweight)} & \textbf{1.77}
& \textbf{1.85}
& \textbf{1.55}
& \textbf{1.62}
& \textbf{0.17}
& \textbf{0.11}
& \textbf{0.30}
& \textbf{0.95} \\

\C (Medium) & 3.39
& 3.00
& 3.19
& 3.40
& 2.37
& 2.30
& 2.33
& 2.40 \\

\rowcolor{gray!25}
\C (Heavy) & 2.88
& 2.75
& 3.18
& 2.92
& 2.70
& 2.61
& 2.78
& 2.80 \\

\bottomrule
\end{tabular}
}
\vspace{-0.2in}
\end{table*}

We compare the proposed \C defense with six defenses across the DF and DS-19 datasets under four representative WF attack models. As shown in \tableautorefname~\ref{tab:withoutadvtrain}, \C consistently achieves the lowest WF attack accuracy.

Specifically, on the DF dataset, all variants of \C consistently outperform prior defenses by a substantial margin. In particular, the lightweight variant of \C reduces attack accuracy to as low as 1.55\% across all attack models, yielding a clear improvement over the strongest existing baseline. While \textit{Palette} achieves a minimum accuracy of 1.92\%, \C further advances the SOTA defense by achieving lower and more consistent accuracy across attacks. Compared to other competitive defenses, such as \textit{Adaptive Tamaraw} and \textit{GAPDiS}, \C demonstrates a marked reduction in attack success rates, underscoring its robustness against diverse attack architectures. 
Importantly, the medium and heavy configurations of \C also exhibit strong performance, consistently surpassing most baselines and indicating that the effectiveness of \C is maintained across different resource-performance trade-offs. The advantage of \C becomes even more pronounced on the more challenging DS-19 dataset. Here, lightweight \C achieves near-zero attack accuracy, substantially surpassing all prior defenses, including \textit{Palette} and \textit{Adaptive Tamaraw}, whose best results remain above 1.00\%. 

Moreover, while existing defenses exhibit noticeable performance degradation or variability across different attacks, \C maintains consistently low accuracy across all models, highlighting its strong robustness capability. The proposed trace mutation mechanism alters the underlying traffic patterns relative to the original website trace and target-morphing traces, thereby reducing their distinguishability under WF attacks. By mitigating such pattern redundancy, \C achieves more diverse and less predictable traces, contributing to its robust and stable defense performance.

\section{Pluggable Transporting Deployment}
\label{subsec:pluggableent}

\C is a trace-driven pluggable transport that morphs packet timing and directionality to mimic realistic encrypted web traffic while preserving Tor’s deployment model. The design sits on top of an authenticated, encrypted transport channel and therefore treats morphology as a traffic-shaping problem rather than a cryptographic one: confidentiality and integrity are provided by the underlying channel, while \C controls observable metadata. At a high level, \C continuously emits packets according to a selected reference trace template and maps application payload into template slots designated for outbound transmission. When no payload is available, but the defense is active, it emits dummy packets to preserve schedule consistency, reducing leakage from idle periods and burst structure.

The core of \C is a prefix-conditioned template engine. An offline dataset of representative traces is converted into directional sequences and indexed in a radix trie. Each online connection maintains a rolling prefix of recent packet directions and queries the trie for templates that share the same prefix. Additionally, one candidate is randomly selected and may be mutated over time to enforce feasibility constraints. This gives \C two properties that are hard to obtain simultaneously with static schedules: (i) continuity with recent flow evolution, and (ii) controlled stochasticity that avoids deterministic fingerprints. In effect, the transport performs constrained sampling over a realistic trace manifold rather than replaying fixed traces.

\subsection{Effectiveness of Prefix Length in Radix Trie}
\label{subsec:appradix}

\begin{figure}[t]
    \centering
    \subfloat[Effectiveness of morphing trace targeting on the DF dataset.\label{subfig:radixdf}]{
        \includegraphics[width=0.45\linewidth]{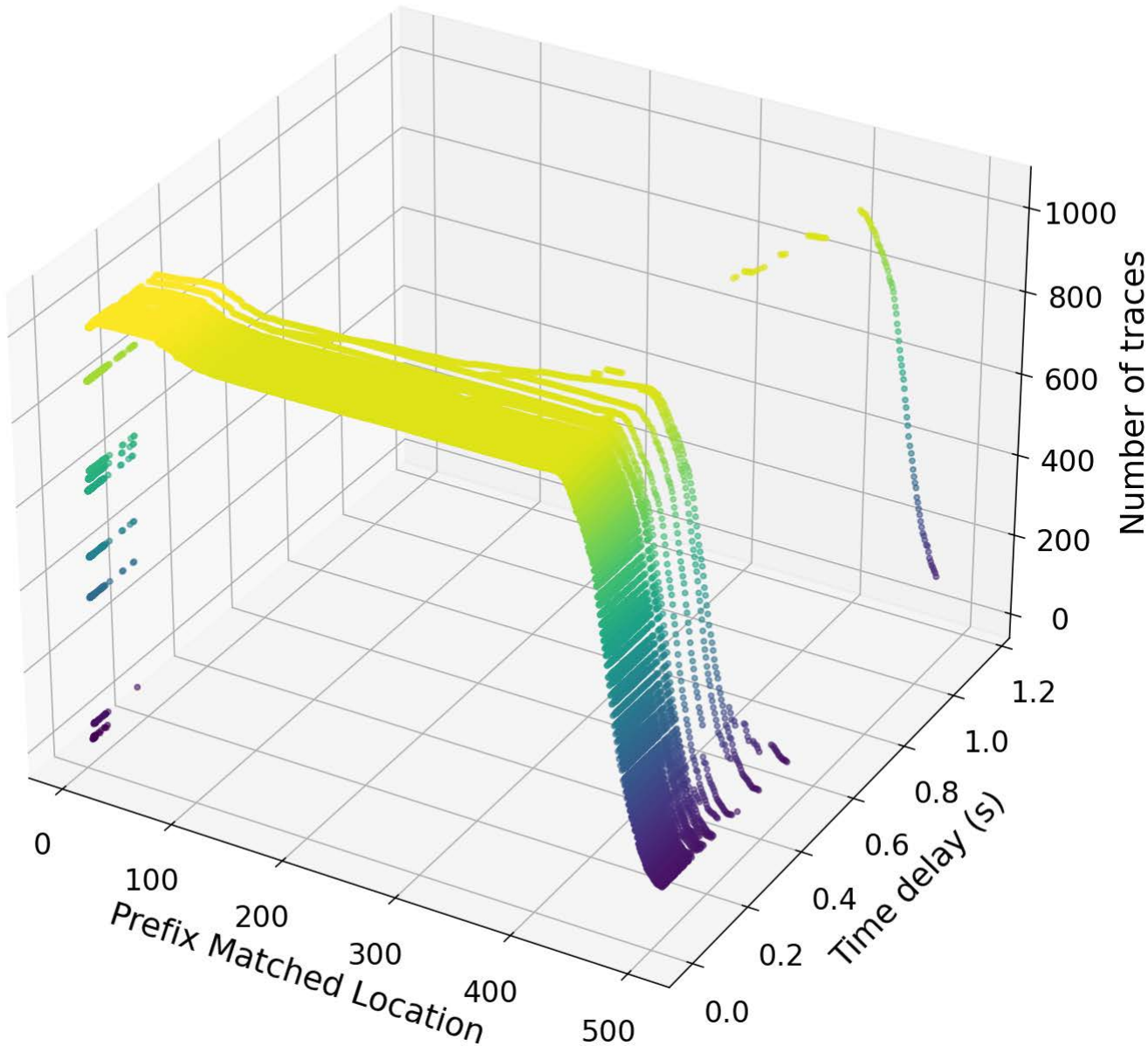}
    }
    \hfill
    \subfloat[Effectiveness of morphing trace targeting on the DS-19 dataset.\label{subfig:radixdf19}]{
        \includegraphics[width=0.45\linewidth]{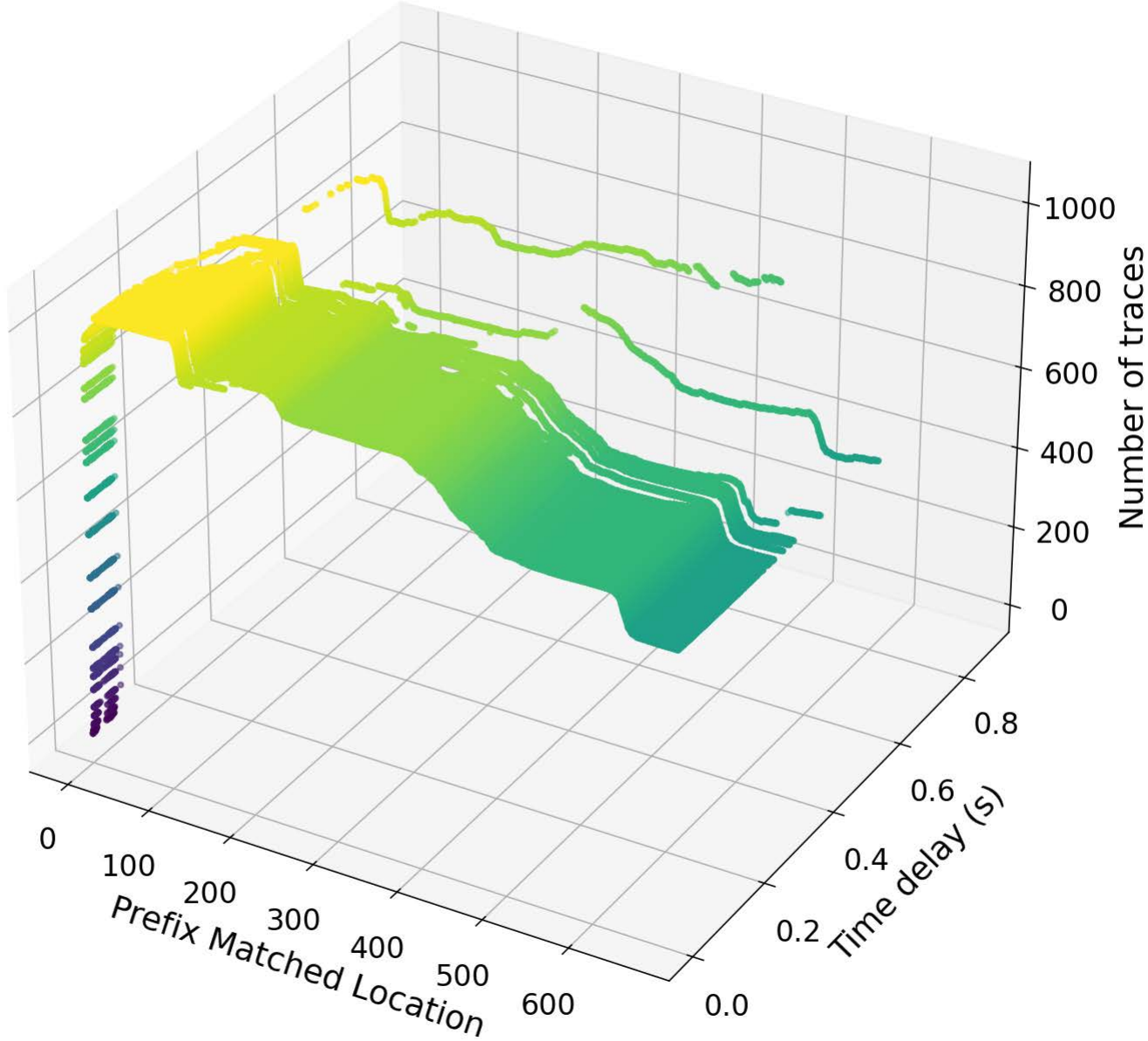}
    }
    \caption{Overview of the effectiveness of morphing trace targeting through radix trie on the DF and DS-19 datasets.}
    \label{fig:radixperf}
\end{figure}

\figureautorefname~\ref{fig:radixperf} illustrates the efficacy of morphing trace targeting using a radix trie on the DF and DS-19 datasets. The figures show the targeting latency, the depth at which the morphing target is determined, and the number of candidate website traces sharing the same prefix. Specifically, \figureautorefname~\ref{subfig:radixdf} presents the prefix-matching performance on the DF dataset. The results indicate substantial intra-dataset diversity. Most website traces exhibit highly discriminative prefixes after about 450 packet transmissions. This property enables efficient target selection, with the delay of completing morphing-trace targeting within 0.8 seconds in the majority of website traces. In contrast, \figureautorefname~\ref{subfig:radixdf19} shows that the DS-19 dataset exhibits lower diversity. A few website traces continue to share identical prefixes even beyond the truncated radix trie length threshold. In such cases, the system resorts to random selection among the remaining candidates. Nevertheless, the overall targeting delay remains below 1 second.

\subsection{Information Leakage Analysis}
\label{subsec:theory}

We provide an idealized analysis to characterize the privacy benefit of randomized many-to-many trace morphing. This analysis is intended to explain the intuition behind the design rather than provide a security guarantee for the complete implementation. In particular, it abstracts away timing, online prefix-based candidate selection, trace mutation, and other implementation-dependent behavior, whose privacy impact is evaluated empirically.

\paragraph{Idealized Morphing Model.}
Let $W$ denote the website class and $C$ denote the candidate morphing trace observed by an adversary. For each website $w$, let $\mathcal{C}_{w}$ denote the set of candidate traces to which its traffic may be mapped. We consider an idealized setting in which the defense randomly selects a candidate from $\mathcal{C}_{w}$ and the observable output is determined only by the selected candidate. Thus, the information available to the adversary is characterized by $I(W;C)$.

A deterministic mapping (\eg, one-to-one) preserves the website identity through the selected candidate and therefore does not protect in this idealized model. In contrast, the many-to-many mapping of \C allows candidate traces to be shared across different websites. For an observed candidate $c$, define
\begin{equation}
\mathcal{W}(c)={w : P(C=c\mid W=w)>0},
\end{equation}
which contains all website classes that may produce $c$. If $C=c$ is observed, the adversary must distinguish among the classes in $\mathcal{W}(c)$ rather than infer a unique source website.

Suppose that candidate $c$ can be produced by $m$ website classes with equal priors and equal conditional selection probabilities. By Bayes' rule,
\begin{equation}
P(W=w\mid C=c)=\frac{1}{m}, \qquad w\in\mathcal{W}(c),
\end{equation}
and therefore
\begin{equation}
H(W\mid C=c)=\log_{2} m.
\end{equation}
Under these assumptions, increasing the number of source classes sharing a candidate increases the residual uncertainty of an adversary in this idealized setting. More generally, the leakage is
\begin{equation}
I(W;C)=H(W)-H(W\mid C),
\end{equation}
so candidate sharing can reduce leakage when it increases $H(W\mid C)$.

\paragraph{Scope of the Analysis.}
The above result applies only to the idealized candidate-mapping abstraction and should not be interpreted as a security bound for the complete \C system. In the implementation, candidate selection is conditioned on prefixes observed from real-time traffic, and the final trace may be further modified by timing behavior, dummy-cell insertion, trace cut-off, and mutation. In particular, trace mutation is designed to reduce directional mismatch and transmission overhead; reducing this mismatch does not necessarily reduce information leakage and may, in some cases, preserve additional characteristics of the original traffic. We therefore make no information-theoretic claim that mutation decreases $I(W;Y)$ or the divergence between website-specific output distributions. Instead, the privacy of the complete system, including these dependencies, is quantified empirically using WeFDE~\cite{li2018measuring} leakage measurements and evaluated against the WF attacks considered in Section~\ref{subsec:leakage}.

\begin{table*}[t]
\centering
\small
\setlength{\tabcolsep}{3pt}
\renewcommand{\arraystretch}{1.2}

\caption{Information leakage for each feature category (Bit).}
\label{tab:infoleakage}
\scalebox{0.83}{
\begin{tabular}{lllccclccc}
\toprule
\rowcolor{olive!15}
Categories & Pkt. Count & Time & $Interval_{Mean}$ & $Interval_{std}$ & Pkt. Distribution & Burst & First 20 & First 30 & Diag GNB \\

\midrule
\rowcolor{gray!25}
NoDef& 1.231
& 0.818
& 0.828
& 0.660
& 1.233
& 1.030
& 0.675
& 0.520
& 5.82 \\

FRONT& 1.253 {\footnotesize{\color{red}{$\uparrow$0.022}}}
& 0.786 
& 0.753
& 0.459
& 0.795
& 0.157
& 0.167
& 0.479
& 5.57 \\

\rowcolor{gray!25}
RegulaTor & 1.183
& 0.972 {\footnotesize{\color{red}{$\uparrow$0.154}}}
& 0.662
& 0.288
& 0.687
& 1.251 {\footnotesize{\color{red}{$\uparrow$0.221}}}
& 0.067
& 0.044
& 5.49 \\

Palette & 1.059
& 0.788
& 0.612
& 0.328
& 1.043
& 0.431
& 0.025
& 0.036
& 4.62 \\

\rowcolor{gray!25}
\C & 0.727
& 0.709
& 0.614
& 0.495
& 0.718
& 0.374
& 0.351
& 0.464
& 4.57 \\

\bottomrule
\end{tabular}
}
\vspace{-0.2in}
\end{table*}

\subsection{Information Leakage Measurement}
\label{subsec:leakage}

As \C employs randomized many-to-many traffic morphing, multiple browsing sessions may produce identical or highly similar feature representations. We retain these repeated observations when computing information leakage because they are part of the empirical output distribution generated by the defense. Removing duplicated feature vectors would alter their empirical frequencies and may therefore change the estimated relationship between website labels and observable traffic features. We consequently apply WeFDE~\cite{li2018measuring} to the complete defended trace set without deduplicating feature representations. As a robustness check, we additionally evaluate a deduplicated variant and report the resulting leakage separately to determine whether the conclusions are sensitive to duplicate handling.

\tableautorefname~\ref{tab:infoleakage} reports the information leakage estimated from the complete empirical trace distribution across eight representative feature categories, together with the aggregate Diagonal Gaussian Naïve Bayes (Diag GNB) measurement. Lower leakage indicates that the observable traffic features reveal less information about the corresponding website labels. \C achieves lower or comparable leakage across most feature categories and obtains the lowest aggregate Diag GNB leakage among the evaluated defenses. \C reduces packet-count leakage to 0.727, compared to 1.231 for NoDef and 1.253 for FRONT, and lowers packet-distribution leakage to 0.718, compared to 1.233 for NoDef. In contrast, RegulaTor exhibits higher leakage for timing-related features, with a mean interval of 0.972 and a burst of 1.251, whereas \C maintains lower, more stable values across these dimensions. Palette achieves lower leakage in certain fine-grained features, such as First 20 (packets) with 0.025 and First 30 (packets) with 0.036, yet its leakage remains higher than \C in several other categories. In addition, \C attains the lowest aggregate leakage under Diag GNB with a value of 4.57, outperforming NoDef with 5.82, FRONT with 5.57, and RegulaTor with 5.49, indicating more comprehensive protection across diverse feature spaces.

\subsection{Security Analysis of the Idealized Many-to-Many Mapping}
\label{subsec:boundproof}

Let $W$ denote the set of original traces and $W'$ the set of defended traces. \C first extracts a subset $C \subseteq W$ of candidate traces exhibiting high intra-class diversity and low inter-class separability. The defense defines a many-to-many relation $\mathcal{R} \subseteq W \times C$, where each trace $w \in W$ can be morphed into one or multiple candidate traces in $C$. For any $c \in C$, define the pre-image:
\begin{equation}
\mathcal{R}^{-1}(c) = \{ w \in W : (w, c) \in \mathcal{R} \}.
\end{equation}
Let $\mathcal{R}^{-1}_{y}(c)$ denote the subset of traces in $\mathcal{R}^{-1}(c)$ belonging to class $y$.

\textbf{Website Trace Injection Upper Bound.}
We define the effective injection upper bound for candidate trace $c$ as:
\begin{equation}
\tilde{\delta}(c) = 
\frac{|\mathcal{R}^{-1}(c)|}
{\max_{y} |\mathcal{R}^{-1}_{y}(c)|}.
\end{equation}
This reflects both the number of traces mapped to $c$ and their class diversity.

\begin{lemma}
For any candidate trace $c \in C$, the attacker’s success probability is bounded by:
\begin{equation}
\Pr[\text{success} \mid c] \le \frac{1}{\tilde{\delta}(c)}.
\end{equation}
\end{lemma}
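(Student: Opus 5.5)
We work in the idealized model of the preceding information-leakage analysis, under the same uniformity assumption used there: every trace in the pre-image $\mathcal{R}^{-1}(c)$ is a priori equally likely to be the source of an observed candidate $c$. Concretely, the observation $c$ is produced by first drawing an original trace $w$ uniformly, and then drawing a candidate uniformly from its morphing set. We also assume each source trace in $\mathcal{R}^{-1}(c)$ selects $c$ with the same conditional probability. Under this assumption the posterior over sources given $c$ is uniform on $\mathcal{R}^{-1}(c)$. Success means the attacker outputs the true class label $y$ of the source trace. Because the observable output depends only on the selected candidate, any attack strategy, whether deterministic or randomized, is a function $h(c)$ (possibly randomized) of $c$ alone.

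The key step is to compute the posterior class distribution by Bayes' rule:
\begin{equation}
\Pr[Y=y \mid C=c] = \frac{|\mathcal{R}^{-1}_{y}(c)|}{|\mathcal{R}^{-1}(c)|}.
\end{equation}
This follows directly from uniformity, since the classes partition the pre-image. For a randomized guess $h(c)\sim q(\cdot)$, the success probability is
\begin{equation}
\sum_y q(y)\,\frac{|\mathcal{R}^{-1}_{y}(c)|}{|\mathcal{R}^{-1}(c)|}.
\end{equation}
This is a convex combination, so it is at most the maximum term. The MAP (Bayes-optimal) attacker attains that maximum, which gives
\begin{equation}
\Pr[\text{success}\mid c] \le \frac{\max_y |\mathcal{R}^{-1}_{y}(c)|}{|\mathcal{R}^{-1}(c)|} = \frac{1}{\tilde{\delta}(c)},
\end{equation}
by the definition of $\tilde{\delta}(c)$. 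Since the bound holds for every strategy, it bounds any adversary, including adversarially trained or DAAE-based classifiers, as long as they only see $c$.

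The main obstacle is justifying the uniform-posterior step. If the priors over source traces or the selection probabilities $P(C=c\mid w)$ are unequal, the posterior becomes a weighted version of the expression above, and the stated bound need not hold. I would handle this by stating the assumption explicitly, mirroring the equal-prior, equal-selection hypothesis of the idealized analysis. I would also remark that in the general case the same argument yields the weighted analogue with $|\cdot|$ replaced by probability mass. In addition, I would note that the bound is per-candidate; averaging over $c$ gives an overall bound of $\mathbb{E}_c[1/\tilde{\delta}(c)]$. Consistent with the scope remarks made earlier, timing, prefix-conditioned selection and mutation are abstracted away.
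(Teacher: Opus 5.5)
Your proof is correct and follows the paper's argument: traces in $\mathcal{R}^{-1}(c)$ are indistinguishable once morphed to $c$, so the Bayes-optimal attacker guesses the majority class, and the success probability is $\max_y |\mathcal{R}^{-1}_y(c)|/|\mathcal{R}^{-1}(c)| = 1/\tilde{\delta}(c)$. Your version is more careful in two ways. You state explicitly the uniform-posterior (equal prior, equal selection probability) assumption that the paper uses only implicitly when it equates success probability with the majority fraction. You also bound randomized strategies through the convex-combination step.
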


\begin{proof}
Under \C, all traces in $\mathcal{R}^{-1}(c)$ are morphed into the same observable pattern $c$ and are thus indistinguishable to the adversary. The optimal strategy is to predict the majority class within $\mathcal{R}^{-1}(c)$, yielding:
\begin{equation}
\Pr[\text{success} \mid c] 
= \max_y \frac{|\mathcal{R}^{-1}_y(c)|}{|\mathcal{R}^{-1}(c)|}
= \frac{1}{\tilde{\delta}(c)}.
\end{equation}
\end{proof}

\begin{theorem}
Let the defended trace distribution be induced by sampling candidate traces $c \in C$. Define:
\begin{equation}
\frac{1}{\delta} = \mathbb{E}_{c \sim C} \left[ \frac{1}{\tilde{\delta}(c)} \right].
\end{equation}
Then the attacker’s average success probability is bounded by:
\begin{equation}
\Pr[\text{success}] \le \frac{1}{\delta}.
\end{equation}
\end{theorem}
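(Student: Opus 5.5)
The plan is to condition on the observed candidate and then average the per-candidate bound of the preceding Lemma over the distribution of candidates. Let $Y$ be the true website class and $C$ the candidate trace the adversary observes. Any attack, however adaptive, is a (possibly randomized) decision rule $h$ that maps the observation $c$ to a label $h(c)$. Success is the event $\{h(C)=Y\}$.

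First, I would apply the law of total probability over the candidate pool:
\begin{equation}
\Pr[\text{success}] = \sum_{c\in C} \Pr[C=c]\,\Pr[h(c)=Y \mid C=c] = \mathbb{E}_{c\sim C}\bigl[\Pr[\text{success}\mid c]\bigr],
\end{equation}
where the outer distribution is the one induced by the defense's sampling of candidates, exactly as in the statement.

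Second, inside each term I would invoke the preceding Lemma. It gives $\Pr[\text{success}\mid c]\le 1/\tilde{\delta}(c)$ for every $c$ and every decision rule. The reason is that the adversary's posterior over $Y$ given $C=c$ is concentrated on $\mathcal{R}^{-1}(c)$, and no rule beats the MAP (majority) choice. Because the inequality holds pointwise in $c$, summing it against the nonnegative weights $\Pr[C=c]$ preserves it. This yields $\Pr[\text{success}]\le \mathbb{E}_{c\sim C}[1/\tilde{\delta}(c)] = 1/\delta$ by the definition of $\delta$.

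The only delicate step is the modelling assumption hidden in the Lemma. The posterior of $Y$ given $C=c$ must be proportional to the class counts $|\mathcal{R}^{-1}_y(c)|$. That holds when traces are equally likely a priori and each trace selects its related candidates with equal probability. It also requires that the observation depend on the input only through $c$, which is the idealized setting of the information-leakage analysis. I would state these assumptions explicitly at the start of the proof. I would then note that under non-uniform selection probabilities the same argument goes through if the counts in $\tilde{\delta}(c)$ are replaced by posterior masses. Beyond this, the proof is a one-line averaging argument, and the bound is tight (it holds with equality) when the adversary uses the majority rule for every $c$.
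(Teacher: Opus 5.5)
Your proposal is correct and is essentially the paper's argument: condition on the observed candidate $c$, apply the Lemma pointwise, and average against the induced distribution of candidates. Your placement of the inequality is slightly cleaner than the paper's: in your version it enters through the Lemma (for a possibly non-optimal adversary), and $\mathbb{E}_{c}[1/\tilde{\delta}(c)]=1/\delta$ then holds by definition, whereas the paper writes $\Pr[\text{success}\mid c]=1/\tilde{\delta}(c)$ and puts a vacuous $\le$ at the last step; one small imprecision is in your modelling aside, since a count-proportional posterior needs $\Pr[C=c\mid w]$ to be the same for all $w\in\mathcal{R}^{-1}(c)$, and uniform selection over each trace's own related set does not guarantee this when traces have different numbers of related candidates.
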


\begin{proof}
From the lemma, for any observed trace $c$:
\begin{equation}
\Pr[\text{success} \mid c] = \frac{1}{\tilde{\delta}(c)}.
\end{equation}
Taking expectation over the distribution of candidate traces:
\begin{equation}
\Pr[\text{success}] 
= \mathbb{E}_{c} \left[ \Pr[\text{success} \mid c] \right]
= \mathbb{E}_{c} \left[ \frac{1}{\tilde{\delta}(c)} \right]
\le \frac{1}{\delta}.
\end{equation}
\end{proof}

Unlike mapping traces to fixed or highly regularized patterns, \C leverages a candidate set $C$ with high intra-class diversity and low inter-class separability. By allowing each trace to map to multiple candidates, the pre-image $\mathcal{R}^{-1}(c)$ becomes both large and label-diverse, increasing $\tilde{\delta}(c)$ and reducing attacker success probability, while avoiding the overhead of strict normalization.

\end{document}